\def\BibTeX{{\rm B\kern-.05em{\sc i\kern-.025em b}\kern-.08em
    T\kern-.1667em\lower.7ex\hbox{E}\kern-.125emX}}
\begin{document}
\bstctlcite{IEEEexample:BSTcontrol}
\title{Neural Estimation of the Rate-Distortion Function With Applications to Operational Source Coding}%and Operational Source Coding}
\author{
\IEEEauthorblockN{Eric Lei, Hamed Hassani, and Shirin Saeedi Bidokhti} \\ 
\IEEEauthorblockA{Department of Electrical and Systems Engineering \\ University of Pennsylvania, Philadelphia, PA}
\thanks{Journal submission under review. This work was presented in part at the IEEE International Symposium on Information Theory (ISIT), 2022. 
The work of Eric Lei was supported by a NSF Graduate Research Fellowship. The work of Shirin Saeedi Bidokhti was supported by NSF award 1910594 and an NSF CAREER award 2047482. The work of H. Hassani was supported by NSF award CIF-1943064. This work was also partially supported by the AI Institute for Learning-Enabled Optimization at Scale (TILOS), NSF award CCF-2112665. Email: \{elei, hassani, saeedi\}@seas.upenn.edu. \textit{(Corresponding author: Eric Lei.)}
% This paragraph of the first footnote will contain the date on 
% which you submitted your brief for review. It will also contain support 
% information, including sponsor and financial support acknowledgment. For 
% example, ``This work was supported in part by the U.S. Department of 
% Commerce under Grant BS123456.'' 
}
% \thanks{Email: \{elei, hassani, saeedi\}@seas.upenn.edu.}
}

\maketitle

\begin{abstract}
A fundamental question in designing lossy data compression schemes is how well one can do in comparison with the rate-distortion function, which describes the known theoretical limits of lossy compression. Motivated by the empirical success of deep neural network (DNN) compressors on large, real-world data, we investigate methods to estimate the rate-distortion function on such data, which would allow comparison of DNN compressors with optimality. While one could use the empirical distribution of the data and apply the Blahut-Arimoto algorithm, this approach presents several computational challenges and inaccuracies when the datasets are large and high-dimensional, such as the case of modern image datasets. Instead, we re-formulate the rate-distortion objective, and solve the resulting functional optimization problem using neural networks. We apply the resulting rate-distortion estimator, called NERD, on popular image datasets, and provide evidence that NERD can accurately estimate the rate-distortion function. Using our estimate, we show that the rate-distortion achievable by DNN compressors are within several bits of the rate-distortion function for real-world datasets. Additionally, NERD provides access to the rate-distortion achieving channel, as well as samples from its output marginal. Therefore, using recent results in reverse channel coding, we describe how NERD can be used to construct an operational one-shot lossy compression scheme with guarantees on the achievable rate and distortion. Experimental results demonstrate competitive performance with DNN compressors.
\end{abstract}

\begin{IEEEkeywords}
Generative models, lossy compression, neural networks, rate-distortion theory, reverse channel coding
\end{IEEEkeywords}
\section{Introduction}

    Driven by advances in deep neural network (DNN) compression schemes, rapid progress has been made in finding high-performing lossy compression schemes for large, high-dimensional datasets that remain practical \cite{Balle2017, Theis2017a, SoftToHardVQ, NTC}. While these methods have empirically shown to outperform classical compression schemes for real-world data (e.g. images), it remains unknown as to how well they perform in comparison to the fundamental limit, which is given by the rate-distortion function. To investigate this question, one approach is to examine a stylized data source with a known probability distribution that is analytically tractable, such as the sawbridge random process, as done in \cite{Wagner2021NeuralNO}. This allows for a closed-form solution of the rate-distortion function; one can then compare it with empirically achievable rate and distortion of DNN compressors trained on realizations of the source. However, this approach does not evaluate DNN compressors on true sources of interest, such as real-world images, for which architectural choices such as convolutional layers have been engineered \cite{LecBen15}. Thus, evaluating the rate-distortion function on these sources is paramount to understanding the efficacy of DNN compressors on real-world data. 
    
    Furthermore, a class of information-theoretically designed one-shot lossy source codes with near-optimal rate-distortion guarantees, which fall under the area of reverse channel coding \cite{sfrl, theis2021algorithms, harsha, entanglement, DCS, quantumreverse, bravermangarg, cuff2008, poissonmatching}, can provide a one-shot benchmark for DNN compressors, which are typically one-shot. However, these schemes require the rate-distortion-achieving conditional distribution (see \eqref{eq:RD}), which is generally intractable for real-world data, especially when the data distribution is unknown and only samples are available. Having the ability to recover the rate-distortion function's optimizing conditional distribution only from samples, in addition to the rate-distortion function itself, would allow for implementation of reverse channel codes even without access to the full data distribution. 
    
 Consider an independent and identically-distributed (i.i.d.) data source $X \sim P_X$, where $P_X$ is a probability distribution supported on alphabet $\mathcal{X}$. Let $\mathcal{Y}$ be the reproduction alphabet, and $\dist: \mathcal{X} \times \mathcal{Y} \rightarrow \mathbb{R}^+$ be a distortion function on the input and output alphabets. The asymptotic limit on the minimum number of bits required to achieve a distortion $D$ is given by the rate-distortion function \cite{shannon48, shannonRD, CoverThomas},  defined as
 %   \begin{definition} The rate-distortion function $R(D)$ of a source $P_X$ under distortion function $\dist$ is given by 
        \begin{equation}
    	    R(D) := \inf_{\substack{P_{Y|X}:  \EE_{P_{X,Y}}[\dist(X,Y)]  \leq D}} I(X;Y), 
    	    \label{eq:RD}
    	\end{equation} 
   % \end{definition} 
    Any rate-distortion pair $(R,D)$ satisfying $R > R(D)$ is achievable by some lossy source code, and no code can achieve a rate-distortion less than $R(D)$. It is important to note that $R(D)$ is achievable only under asymptotic blocklengths, whereas DNN compressors are typically one-shot, as compressing i.i.d. blocks for real-world datasets may not be feasible. However, the one-shot achievable region is known to be within $\log(R(D)+1)+O(1)$ bits of $R(D)$ \cite{sfrl}, and thus  even in the one-shot setting, $R(D)$ remains an appropriate measure of the fundamental limits. 
    
     \label{sec:BA}
    \begin{figure}[t]
         \centering
        \includegraphics[width=0.45 \textwidth]{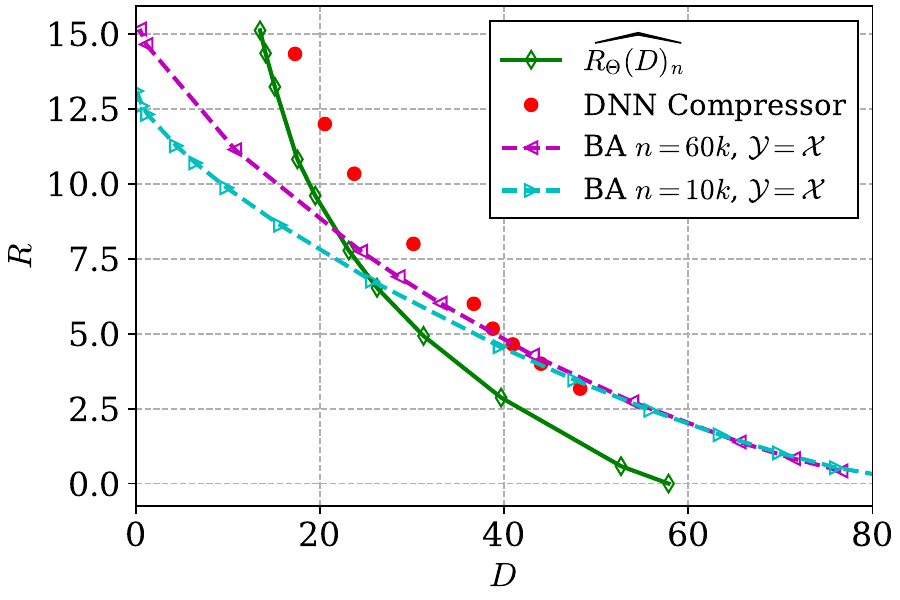}
        \caption{Inaccuracy of discretized Blahut-Arimoto in comparison to our method, $\widehat{R_\Theta(D)}_n$, for computing the rate distortion curve on the MNIST dataset. DNN compressors provide codes that lie in the achievable region. See text for details.}
        \label{fig:plug_in}
    \end{figure}
    
    There are several immediate challenges when computing $R(D)$ (and its optimizing conditional distribution) for large-scale data. Even when the distribution of $P_X$ is known, the analytical form of the rate-distortion function has been difficult to evaluate, and has been characterized only on specific sources. This prohibits an analytical derivation using a density estimate of $P_X$ (which are also not sample-efficient in high dimensions) in most cases. Computational methods such as the Blahut-Arimoto (BA) algorithm seem to be better suited for our setting; however, as will be shown, BA provides inaccurate estimates and fails to scale with large datasets.
    
    \subsection{Blahut-Arimoto Fails to Scale}
    Let $\DKL(\mu || \nu)$ be the Kullback-Leibler (KL) divergence, defined as $\EE_\mu\bracket{\log \frac{d\mu}{d\nu}}$ when the density $\frac{d\mu}{d\nu}$ exists and $+\infty$ otherwise. Due to the convex and strictly decreasing properties \cite{CoverThomas} of $R(D)$, it suffices to fix some $\beta > 0$, and solve the following double optimization problem.

    \begin{lemma}[Double-Minimization Form, cf. {\cite[Ch.~10]{CoverThomas}}, {\cite{Yeung2002AFC}}]
        The minimizers $P^{(\beta)}_{Y|X}$, $Q_Y^{(\beta)}$ of
        \begin{equation}
	     \RD(\beta):=\inf_{Q_Y} \inf_{P_{Y|X}   } \DKL(P_{X,Y}|| P_X \otimes Q_Y) + \beta \mathop{\EE}_{P_{X,Y}}[\dist(X,Y)],
	     \label{eq:double}
	    \end{equation}
	    yield a unique point $R_\beta = \DKL(P_XP^{(\beta)}_{Y|X} || P_X \otimes Q_Y^{(\beta)})$ and $D_\beta = \EE_{P_X P^{(\beta)}_{Y|X}} [\dist(X,Y)]$ on the positive-rate regime of the rate-distortion curve, i.e. $R(D_\beta) = R_\beta$, such that $D_\beta < D_{\mathrm{max}}$ where $R(D_{\mathrm{max}}) = 0$.
        % \begin{proof}
        % See \cite[Lemma~10.8.1]{CoverThomas}.
        % \end{proof}
        % \begin{proof}
        %     For the joint measure defined by $P_{X,Y}=P_X P_{Y|X}$ on $(\mathcal{X} \times \mathcal{Y}, \mathcal{A} \otimes \mathcal{B})$, we have that
        %     \begin{equation}
        %         \inf_{Q_Y \in \mathcal{P}(\mathcal{Y})} \DKL(P_{X,Y}|| P_X \otimes Q_Y) = \DKL(P_{X,Y} || P_X \otimes P_Y)
        %     \end{equation}
        %     which follows by the steps in the proof of Lemma~10.8.1 of \cite{CoverThomas} which also hold under the more abstract setting.
        %     We can thus rewrite \eqref{eq:lagrangian} as
        %     \begin{align}
        %         &R(D) = \inf_{\substack{P_{Y|X}:  \mathop{\EE}_{P_{X,Y}}[\dist(X,Y)]} \leq D} \DKL(P_{X,Y} || P_X \otimes P_Y)  \nonumber \\
        %         &= \inf_{\substack{P_{Y|X}:  \mathop{\EE}_{P_{X,Y}}[\dist(X,Y)]} \leq D} \inf_{Q_Y} \DKL(P_{X,Y}|| P_X \otimes Q_Y) \nonumber \\
        %         &= \inf_{Q_Y} \inf_{\substack{P_{Y|X}:  \mathop{\EE}_{P_{X,Y}}[\dist(X,Y)]} \leq D} \DKL(P_{X,Y}|| P_X \otimes Q_Y) 
        %     \end{align}
        % \end{proof}
    \end{lemma}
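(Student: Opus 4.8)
The plan is to collapse the inner minimization over $Q_Y$ using the ``golden formula,'' which turns \eqref{eq:double} into the standard Lagrangian relaxation of \eqref{eq:RD}, and then to read off the claimed point $(D_\beta,R_\beta)$ from the convexity and strict monotonicity of $R(D)$ (which we take as given).

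First I would eliminate $Q_Y$. For a fixed channel $P_{Y|X}$ with induced output marginal $P_Y$, a one-line expansion gives $\DKL(P_{X,Y}\,\|\,P_X\otimes Q_Y)=I(X;Y)+\DKL(P_Y\,\|\,Q_Y)$ for every $Q_Y$; since the last term is nonnegative and vanishes only at $Q_Y=P_Y$, the inner infimum is attained uniquely at $Q_Y=P_Y$ and equals $I(X;Y)$. Interchanging the two infima (always permissible) then gives
\[
\RD(\beta)=\inf_{P_{Y|X}}\left(I(X;Y)+\beta\,\EE_{P_{X,Y}}[\dist(X,Y)]\right),
\]
the Lagrangian form of \eqref{eq:RD}.

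Next I would identify this value with a point on the curve. Writing $D(P_{Y|X}):=\EE_{P_{X,Y}}[\dist(X,Y)]$ and using $I(X;Y)\ge R(D(P_{Y|X}))$ (immediate from the definition of $R$ as an infimum under the distortion constraint), one inequality reads $\RD(\beta)\ge\inf_{D}(R(D)+\beta D)$; the reverse follows by substituting near-optimal channels for each $R(D)$, so $\RD(\beta)=\inf_{D}(R(D)+\beta D)$. Since $R$ is convex, non-increasing, and identically $0$ for $D\ge D_{\mathrm{max}}$, the map $D\mapsto R(D)+\beta D$ is convex and coercive, hence attains its infimum at some $D_\beta\le D_{\mathrm{max}}$; restricting to $\beta$ large enough that the minimizer is not the zero-rate one (equivalently, the left slope of $R$ at $D_{\mathrm{max}}$ lies below $-\beta$) gives $D_\beta<D_{\mathrm{max}}$ and $R(D_\beta)>0$. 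Letting $P^{(\beta)}_{Y|X}$ attain the Lagrangian infimum, with marginal $Q^{(\beta)}_Y$, and setting $D_\beta:=D(P^{(\beta)}_{Y|X})$ and $R_\beta:=I(X;Y)|_{P^{(\beta)}_{Y|X}}=\DKL(P_XP^{(\beta)}_{Y|X}\,\|\,P_X\otimes Q^{(\beta)}_Y)$, feasibility of $P^{(\beta)}_{Y|X}$ at distortion level $D_\beta$ gives $R_\beta\ge R(D_\beta)$, while $R(D_\beta)+\beta D_\beta\ge\inf_{D}(R(D)+\beta D)=\RD(\beta)=R_\beta+\beta D_\beta$ gives the reverse; hence $R(D_\beta)=R_\beta$.

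The main obstacle will be the two points left implicit above: existence of the minimizing $P^{(\beta)}_{Y|X}$ for the possibly continuous, high-dimensional alphabets of interest (which I would address by compactness in the finite-alphabet case, by appealing to convergence of the Blahut-Arimoto iteration, or by imposing mild regularity on $\dist$), and \emph{uniqueness} of the point $(D_\beta,R_\beta)$. For the latter, $D\mapsto R(D)+\beta D$ could in principle be constant on an interval---a linear piece of $R$ with slope exactly $-\beta$---so distinct optimal channels might produce distinct $D_\beta$; ruling this out uses strict convexity of $R$ on its positive-rate regime, which I would invoke from the cited references as part of the standing assumptions. The remaining steps are routine Lagrangian-duality bookkeeping.
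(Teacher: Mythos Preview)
The paper does not prove this lemma at all; it is stated with citations to \cite[Ch.~10]{CoverThomas} and \cite{Yeung2002AFC} and used as a known result. Your reconstruction---collapsing the inner $Q_Y$-infimum via the identity $\DKL(P_{X,Y}\|P_X\otimes Q_Y)=I(X;Y)+\DKL(P_Y\|Q_Y)$ to obtain the Lagrangian, then reading off $(D_\beta,R_\beta)$ from convexity and strict monotonicity of $R(D)$---is exactly the standard textbook argument those references contain, and the caveats you flag (existence of a minimizing channel, strict convexity to ensure uniqueness of the contact point) are precisely the regularity conditions under which those textbooks state the result.
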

	
	 The Blahut-Arimoto (BA) solves \eqref{eq:double} by alternating steps on $P_{Y|X}$ and $Q_Y$ until convergence.  In discrete settings,  the optimizers take the following closed form:
        \begin{align}
            p(y|x) &= \frac{r(y)e^{-\beta \dist(x,y)}}{\sum_{\tilde{y} \in \mathcal{Y}} r(\tilde{y})e^{-\beta \dist(x,\tilde{y})}}, \quad \forall x \in \mathcal{X}, y \in \mathcal{Y} \label{eq:BA_1} \\
            r(y) &= \sum_{x \in \mathcal{X}} p_X(x) p(y|x), \quad \forall y \in \mathcal{Y}
        \end{align}
    Even though BA requires knowledge of the source distribution, one can use the empirical distribution $\hat{P}_n = \frac{1}{n}\sum_{i=1}^n \delta_{X_i}$ as a proxy. This, however, does not scale in the case of modern datasets. Consider the setting when $\mathcal{X} = \mathbb{R}^m$ is continuous. Applying BA requires discretization of the input and output alphabets. In many cases, this would require acute knowledge of how to discretize $\mathbb{R}^m$ to form an appropriate reconstruction alphabet $\mathcal{Y}$, and even if one could, it might result in computational complexity that grows, potentially exponentially, with $m$. One would need to store a $n \times |\mathcal{Y}|$ matrix for the conditional PMFs and a $|\mathcal{Y}|$-sized vector for the output marginal PMF, which may not fit in memory depending on the number of data points or the choice of discretization. For example, in image compression, where we assume each $X_i \in \mathbb{R}^m$ to be a single image realization, $\mathcal{Y} \subseteq \mathbb{R}^m$. Even for 8-bit grayscale images, full precision quantization would require $2^8 \cdot m$ points, and although one could provide better discretization schemes, they may still require an intractable number of points. 
    
    To demonstrate, we attempt to apply discretized BA to MNIST digits in Fig.~\ref{fig:plug_in}, and plot its estimated curve in comparison to rate-distortion (with squared-error distortion) achieved by DNN compressors. Specifically, our source is the empirical MNIST distribution $\hat{P}_n$, and we discretize $\mathcal{Y}$ to be exactly the support of our data, i.e. $\mathcal{Y} = \bigcup_{i=1}^n \{X_i\}$. While this scheme should converge to the true rate-distortion function as $n \rightarrow \infty$ \cite{RDplugin}, we see that even for $n=60,000$, this fails to capture the general trend of the DNN compressors. Finally, the  distortion corresponding to $R=0$, known as $D_{\max}$, that BA estimates is far from the optimal given by $\EE_{P_X}\bracket*{\|X-\mu_X\|_2^2}$ -- see Section~\ref{sec:results} for more details. This showcases the inaccuracy of BA in estimating the rate-distortion function even with relatively large number of samples. In contrast, our method, which provides the estimate $\widehat{R_\Theta(D)}_n$, does not exhibit these failures and is able to generalize to the true MNIST distribution for a significant portion of the rate-distortion function.

    \subsection{Related Work}
    There have been several recent works that attempt to estimate $R(D)$ on real-world image data. Aside from the classical works of Arimoto \cite{arimoto} and Blahut \cite{blahut}, the first work that uses neural networks to estimate rate-distortion is \cite{qing2020}, who parameterize the $Q_{Y|X}$ channel using restricted Boltzmann machines and study small synthetic sources. More recently, there have been works such as \cite{yang2021towards, yang2021lower, gibson17}, in which the authors bound the rate-distortion function on real-world data. The authors of \cite{yang2021towards} provide sandwich bounds on $R(D)$, where the upper bound is variational and proved to be tight. In contrast, this work, which was independently developed around the same time, provides an estimate of $R(D)$ by replacing a class of distributions that $R(D)$ minimizes over with a parameterized set of distributions, leading to a natural upper bound. Additionally, the works in \cite{yang2021towards, theis2021algorithms, flamich2020compressing} discuss the potential of reverse channel coding applied to image compression; our work directly implements reverse channel coding using the rate-distortion achieving $Q_{Y|X}$ channel learned from our $R(D)$ estimate. In \cite{RDplugin}, the authors analyze theoretical properties of the plug-in estimator for $R(D)$, but do not provide a method that can be applied to real-world datasets.
    
    A related area of work lies in the generative modeling literature, where the rate-distortion trade-off is often used to evaluate generative models and unsupervised learning algorithms. The most relevant work is \cite{Huang2020EvaluatingLC}, where the authors take a rate-distortion perspective to evaluate the performance of generative adversarial networks (GANs) and variational autoencoders (VAEs). In their formulation, they assume the trained generative model is the output $\mathcal{Y}$-marginal of the rate-distortion objective, and find an upper bound on the rate-distortion needed to reproduce the generative model, not the true rate-distortion function of the \textit{source}. Much of the other work in this area \cite{Burgess2018UnderstandingDI, Brekelmans2019ExactRI, brokenELBO} use variational bounds on the rate-distortion for the purposes of representation learning, and lack a direct connection to  fundamental limits of lossy compression.

    \subsection{Contributions} 
    As opposed to the aforementioned approaches in Sec~\ref{sec:BA}, we take a step back and reformulate the rate-distortion objective into a min-max objective using duality, building on results from \cite{Dembo}. As will be shown, this alleviates many of the issues plaguing previous methods, and allows for practical implementation of lossy compression schemes based on reverse channel coding, solely from samples.
    Our contributions are as follows. 
    \begin{itemize}
        \item We propose an estimator of $R(D)$ based on neural networks, called NERD, which we show is strongly consistent, and provide a corresponding algorithm to compute the estimate from samples for a broad class of distortion measures. 
        \item We empirically show that these methods provide accurate  estimates of $R(D)$ on synthetic as well as real-world datasets, and that DNN autoencoder compressors achieve a rate-distortion within a few bits of our estimate.
        \item We demonstrate how the optimal conditional distribution (or channel) of $R(D)$ can be approximately recovered from NERD, and applied to reverse channel coding schemes, which result in an operational one-shot lossy compression scheme.
        \item We experimentally show that on real-world data, this scheme performs competitively with DNN compressors while also satisfying guarantees on the achievable rate and distortion. 
        \item We provide evidence that the gap between the one-shot DNN compressors and the estimated rate-distortion function could be minimized by using DNN compressors that perform block coding.
    \end{itemize}
    
    % In essence, we provide a scalable method to estimate the rate-distortion function on real-world datasets, and provide evidence that it is accurate using known properties of $R(D)$; we also demonstrate the achievable rate-distortion of state-of-the-art (one-shot) DNN compressors to be within several bits of our estimate and competitive. These findings open up further avenues for  research. In particular, it remains open whether the gap between the performance of DNN compressors and the asymptotic $R(D)$ function is due to their one-shot nature and if we could  close this gap by developing DNN compressors that perform block coding.  It also remains open as to whether or not our proposed one-shot codes described in Sec.~\ref{sec:one-shot} will work in practice and could outperform DNN compressors. 
    
    \subsection{Notation}
    \label{sec:notation}
    We use $\EE[\cdot]$ and $\PP(\cdot)$ to denote expectation and probability, respectively. In general, we use subscript letters to denote a probability measure's respective space, e.g. $Q_Y$ for a distribution supported on $\mathcal{Y}$. The distribution $P_X$ refers to the source (or data) distribution, supported on $\mathcal{X}$. For a measure $\mu$, we use $f_* \mu$ to denote the pushforward measure of $\mu$ through a function $f$. We use $\otimes$ to denote product measures, e.g. $\mu \otimes \nu$. We assume logarithms to be taken base 2. $\dist(\cdot, \cdot)$ represent a distortion measure, $\DKL(\cdot || \cdot)$ is the Kullback-Leibler divergence, and $W_p(\cdot, \cdot)$ is the $p$-Wasserstein distance. 
    
    \section{Problem Formulation}
    \label{sec:pf}

Our goal is to estimate the rate-distortion function $R(D)$ of some source $P_X$. However, we only have access to $n$ samples $X_1,\dots,X_n$ drawn i.i.d. from $P_X$, and do not assume any other knowledge about its distribution.

    As opposed to BA, which solves the double minimization problem \eqref{eq:double} in closed form, we use a dual form of the rate-distortion function. We first note that the constrained version of the inner minimization problem in \eqref{eq:double} is known as the \textit{rate function} in the literature \cite{RDplugin, Dembo}, i.e.
    \begin{equation}
        R(Q_Y, D) := \inf_{\substack{P_{Y|X}: \\ \mathop{\EE}_{P_{X,Y}}[\dist(X,Y)] \leq D}} \DKL(P_{X,Y}|| P_X \otimes Q_Y), 
        \label{eq:rate_func}
    \end{equation}
    which exhibits the following dual characterization.
    \begin{lemma}[Rate Function Duality, {\cite[Sec. 2]{Dembo}}] The rate function can be equivalently expressed as follows. 
    \begin{equation}
        R(Q_Y, D) = \sup_{\tilde{\beta} \leq 0} \,\, \tilde{\beta} D - \EE_{P_X} \left[\log \EE_{Q_Y} \left[e^{\tilde{\beta} \dist(X,Y)}\right] \right] 
    \end{equation}
    \end{lemma}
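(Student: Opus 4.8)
\textit{Proof proposal.} The plan is to recognize $R(Q_Y,D)$ as the value of a convex program in the conditional law $P_{Y|X}$ and to dualize the distortion constraint with a Lagrange multiplier. For fixed $P_X$ and $Q_Y$, the map $P_{Y|X} \mapsto \DKL(P_{X,Y}\,\|\,P_X\otimes Q_Y)$ is convex and the distortion functional $P_{Y|X}\mapsto \EE_{P_{X,Y}}[\dist(X,Y)]$ is affine, so introducing $\lambda \ge 0$ and setting
\begin{equation}
L(P_{Y|X},\lambda) := \DKL(P_{X,Y}\,\|\,P_X\otimes Q_Y) + \lambda\Big(\EE_{P_{X,Y}}[\dist(X,Y)] - D\Big),
\end{equation}
we have $R(Q_Y,D) = \inf_{P_{Y|X}}\sup_{\lambda\ge 0} L(P_{Y|X},\lambda)$, while the asserted right-hand side is $\sup_{\lambda\ge 0}\inf_{P_{Y|X}} L(P_{Y|X},\lambda)$ after the substitution $\tilde\beta = -\lambda \le 0$. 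So the identity is exactly a strong-duality statement, and ``$\ge$'' will be free once the inner infimum is computed.

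Next I would evaluate $\inf_{P_{Y|X}} L(P_{Y|X},\lambda)$ in closed form. Using the chain rule / disintegration $\DKL(P_{X,Y}\,\|\,P_X\otimes Q_Y) = \EE_{P_X}\!\big[\DKL(P_{Y|X}\,\|\,Q_Y)\big]$, the Lagrangian decouples over $x$, and for $P_X$-a.e.\ $x$ the inner problem is $\inf_{\mu}\big\{\DKL(\mu\,\|\,Q_Y) + \lambda\,\EE_{\mu}[\dist(x,Y)]\big\}$. The Gibbs variational principle (equivalently, the Donsker--Varadhan formula $-\log\EE_{\nu}[e^{-g}] = \inf_{\mu}\{\DKL(\mu\|\nu)+\EE_\mu[g]\}$, attained at $d\mu^\star \propto e^{-g}\,d\nu$) gives, with $g(y)=\lambda\,\dist(x,y)\ge 0$,
\begin{equation}
\inf_{\mu}\big\{\DKL(\mu\,\|\,Q_Y) + \lambda\,\EE_{\mu}[\dist(x,Y)]\big\} = -\log \EE_{Q_Y}\!\big[e^{-\lambda\,\dist(x,Y)}\big],
\end{equation}
where the right-hand side is finite (the integrand lies in $(0,1]$). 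Integrating over $P_X$ and writing $\tilde\beta=-\lambda$ yields $\inf_{P_{Y|X}} L = \tilde\beta D - \EE_{P_X}\!\big[\log\EE_{Q_Y}[e^{\tilde\beta\,\dist(X,Y)}]\big]$, and weak duality gives ``$\ge$'' in the claim. (One should note the $P_X$-integrability of $x\mapsto \log\EE_{Q_Y}[e^{\tilde\beta\dist(x,Y)}]$ may fail for some $\tilde\beta$, in which case the dual objective is $-\infty$ there and that $\tilde\beta$ is simply suboptimal.)

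The substantive part is strong duality, i.e.\ ruling out a gap. I would verify a Slater-type condition: writing $D_{\min}(Q_Y) := \EE_{P_X}\big[\operatorname{ess\,inf}_{Q_Y}\dist(X,Y)\big]$, whenever $D > D_{\min}(Q_Y)$ the constraint region has nonempty relative interior and $R(Q_Y,D)<\infty$, so the standard Lagrangian-duality theorem for convex problems with an affine inequality constraint (after checking convexity and lower semicontinuity of $P_{Y|X}\mapsto\DKL(\cdot\|\cdot)$ in a suitable topology, e.g.\ weak-$*$, together with the requisite compactness/tightness) applies; alternatively, following {\cite[Sec.~2]{Dembo}}, one argues directly by constructing, from a dual (near-)optimizer $\tilde\beta^\star$, the tilted channel $dP^\star_{Y|X=x} \propto e^{\tilde\beta^\star\dist(x,y)}\,dQ_Y(y)$, checking it is primal feasible, and matching objective values via complementary slackness. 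The boundary cases are handled by inspection: for $D \ge D_{\max}$ one has $R(Q_Y,D)=0$ with $\tilde\beta^\star=0$ optimal on both sides, and for $D < D_{\min}(Q_Y)$ both sides equal $+\infty$.

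The step I expect to be the real obstacle is this strong-duality argument in the general continuous, possibly non-compact setting of $\mathcal{Y}$ and unbounded $\dist$: one must control finiteness of $\EE_{Q_Y}[e^{\tilde\beta\dist(X,Y)}]$ and its $P_X$-integral over the relevant range of $\tilde\beta$, settle whether the supremum over $\tilde\beta$ is attained, and secure the measurability/regularity needed to justify exchanging $\inf$ and $\sup$ (or to apply the abstract minimax theorem). The cleanest route is to invoke the convex-duality result of \cite{Dembo} wholesale, with the pointwise Gibbs computation above supplying the explicit form of the dual objective.
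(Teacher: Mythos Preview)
Your proposal is correct and is exactly the standard Lagrangian-duality/Gibbs-variational argument one finds in \cite{Dembo}. Note, however, that the paper does not supply its own proof of this lemma: it is quoted as a known result from \cite[Sec.~2]{Dembo} and used without derivation, so there is nothing in the paper to compare against beyond the citation you already invoke.
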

    % \begin{proof}
    %             See \cite[Prop.~2]{Dembo}.
    % \end{proof}
    \noindent Therefore, $R(D)$ is equivalent to $\inf_{Q_Y} R(Q_Y, D)$ and can be expressed as a min-max problem,
    \begin{equation}
        R(D) = \inf_{Q_Y} \sup_{\tilde{\beta} \leq 0} \,\, \tilde{\beta} D - \EE_{P_X} \bracket*{\log \EE_{Q_Y} \bracket*{e^{\tilde{\beta} \dist(X,Y)}}},
        \label{eq:doubledual}
    \end{equation}
    which can be estimated from samples, since we can approximate expectations with empirical averages for both $P_X$,  $Q_Y$ independently. Furthermore, the inner problem is concave, scalar, and has a unique solution. To solve the inner max, first-order stationary conditions yield \cite[Sec.~2.2]{Dembo} 
    \begin{equation}
        D = \EE_{P_X \otimes Q_Y} \bracket*{\dist(X,Y) \frac{\exp\paran*{\tilde{\beta}\dist(X,Y)}}{\EE_{Y' \sim Q_Y} \bracket*{\exp\paran*{\tilde{\beta}\dist(X,Y')}}}},
        \label{eq:stationary}
    \end{equation} 
    which can be solved for $\tilde{\beta}^*$ via the bisection method. In the next section, we use the dual formulation to derive an estimator that uses neural networks to parametrize $Q_Y$.
    
    % \begin{remark}
    %     A naive approach is to parametrize $Q_Y$ using neural networks in the double minimization form in \eqref{eq:double}. In fact, the mapping approach \cite{mapping_approach} uses a similar idea, where the space $\mathcal{Z}$ is discretized rather than $\mathcal{Y}$, and $G$ can be optimally fit to find the best discretization of $\mathcal{Y}$. One could then solve the inner minimization in \eqref{eq:double} via the first BA step \eqref{eq:BA_1} using batched samples as uniform empirical distributions, similar to the method in Sinkhorn GANs \cite{sinkhornGAN}. However, doing so still suffers from needing a number of samples exponential in $R$, since the KL term is still computed on a discrete distribution.
    % \end{remark}
	
	\section{Neural Estimation of the Rate-Distortion Function}
 \label{sec:NERD}
    We propose to  parametrize the output marginal distribution $Q_Y$ using architectural choices similar to those used in the GAN literature \cite{GAN}. Specifically, let $P_Z$ be some simple base distribution over $\mathcal{Z}$ and let $G: \mathcal{Z} \rightarrow \mathbb{R}^m$ be a function belonging to a function class $\mathcal{G}$. Then, representing distributions $Q_Y$ with the pushforward $G_* P_Z$, we can optimize over functions in $\mathcal{G}$, and arrive at
	\begin{align}
	    \hspace{-0.5em} R_{\mathcal{G}}(D) := \inf_{G \in \mathcal{G}} \sup_{\tilde{\beta} \leq 0} \tilde{\beta} D - \EE_{P_X} \bracket*{\log \EE_{P_Z} e^{\tilde{\beta} \dist(X,G(Z))}}.
	    \label{eq:nrd}
	\end{align}
	
	The equivalence of this (under certain assumptions on $P_Z$) with $R(D)$ is justified in \cite{mapping_approach}. In practice, we only have access to samples $X_1,\dots,X_n$ drawn i.i.d. from $P_X$, and must estimate \eqref{eq:nrd} from the empirical distribution $\hat{P}^{(n)}_X := \frac{1}{n}\sum_{i=1}^n \delta_{X_i}$. Leveraging the expressive power of neural networks, we choose $\mathcal{G}$ to be the class of functions parametrized by neural networks, and arrive at the following estimator (NERD).
	
	\begin{definition}[Neural Estimator of the Rate-Distortion Function (NERD)]
	Let $\mathcal{G} := \{G_\theta\}_{\theta \in \Theta}$ be a class of functions parametrized by a neural network. NERD is given by
	\begin{align}
	    \hspace{-0.5em} \widehat{R_\Theta(D)}_n &:= \inf_{\theta \in \Theta} \sup_{\tilde{\beta} \leq 0} \,\, \tilde{\beta} D - \mathop{\EE}_{\hat{P}^{(n)}_X} \bracket*{\log \mathop{\EE}_{P_Z} \bracket*{e^{\tilde{\beta} \dist(X,G_\theta(Z))}}} \\
	    &= \inf_{\theta \in \Theta} \sup_{\tilde{\beta} \leq 0} \,\, \tilde{\beta} D - \frac{1}{n}\sum_{i=1}^n \log \mathop{\EE}_{P_Z} \bracket*{e^{\tilde{\beta} \dist(X_i,G_\theta(Z))}}
	    \label{eq:NERD}
	\end{align}
% 	where $G_{\theta}\# P_Z$ is the pushforward measure of $P_Z$ by $G_\theta$. 
	\end{definition}
	
	\begin{algorithm}[tb]
       \caption{Neural Estimator of the Rate-Distortion Function (NERD)}
       \label{alg:NERD}
        \begin{algorithmic}
          \STATE {\bfseries Input:} Distortion constraint $D$, batch size $B$, number of steps $T$, learning rate $\eta$
          \STATE Initialize generator neural network $G_\theta : \mathcal{Z} \rightarrow \mathcal{Y}$
        %   \STATE Let $\hat{p}^{(X)} = \frac{1}{B} \vec{1} \in \mathbb{R}^B, ~ \hat{p}^{(Y)} = \frac{1}{B} \vec{1} \in \mathbb{R}^B$
           \FOR{$t=1,2,\dots,T$}
                \STATE Sample $\{x_i\}_{i=1}^B \stackrel{\mathrm{i.i.d.}}{\sim} P_X$
               \STATE Sample $\{z_j\}_{j=1}^B \stackrel{\mathrm{i.i.d.}}{\sim} P_Z$
               \STATE Define $\kappa_{i,j}(\tilde{\beta}, \theta_t) := \exp \paran*{\tilde{\beta}\dist(x_i,G_{\theta_t}(z_j))}$
               \STATE Solve $D = \frac{1}{B}\sum_{i=1}^B \dist(x_i,G(z_i)) \frac{B \cdot \kappa_{i,i}(\tilde{\beta}, \theta_t)}{\sum_{j=1}^B \kappa_{i,j}(\tilde{\beta}, \theta_t)}$ for $\tilde{\beta}^*$
               \STATE{$\theta_{t+1} \leftarrow \theta_t - \eta \nabla_{\theta} \paran*{-\frac{1}{B}\sum_{i=1}^B \log \frac{1}{B} \sum_{j=1}^B \kappa_{i,j}(\tilde{\beta}^*, \theta_t)} $}
           \ENDFOR
    %   \RETURN $\tilde{\beta}D - \frac{1}{B}\sum_{i=1}^B \log \frac{1}{B} \sum_{j=1}^B e^{-\tilde{\beta}\dist(x_i, G_{\theta_T}(z_j))}$
        \end{algorithmic}
    \end{algorithm}

The next theorem shows that NERD is a strongly consistent estimator for the rate distortion function. The proof is provided in Appendix~\ref{sec:proofs}.

    \begin{theorem}[Strong consistency of NERD]
        Suppose the alphabets are $\mathcal{X}= \mathcal{Y} = \mathbb{R}^m$, $P_Z$ is supported on $\mathcal{Z} \in \mathbb{R}^l$, and $P_Z$ is absolutely continuous with respect to Lebesgue measure. Also, suppose that the distortion measure $\dist$ is $L_{\dist}$-Lipschitz in both arguments. Then the NERD estimator in \eqref{eq:NERD} is a strongly consistent estimator of $R(D)$, i.e. 
        \begin{equation}
            \lim_{n \rightarrow \infty} \widehat{R_\Theta(D)}_n = R(D) ~\textrm{almost surely.}
        \end{equation}
        \label{thm:strong_consistency}
    \end{theorem}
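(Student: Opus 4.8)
\emph{Proof plan.} The plan is to control $\big|\widehat{R_\Theta(D)}_n - R(D)\big|$ by a triangle inequality,
\[
\big|\widehat{R_\Theta(D)}_n - R(D)\big| \;\le\; \underbrace{\big|\widehat{R_\Theta(D)}_n - R_{\mathcal{G}}(D)\big|}_{\text{estimation error}} \;+\; \underbrace{\big|R_{\mathcal{G}}(D) - R(D)\big|}_{\text{approximation error}},
\]
where $R_{\mathcal{G}}(D)$ is the population objective \eqref{eq:nrd} over the same neural-network class $\mathcal{G}=\{G_\theta\}_{\theta\in\Theta}$. For the approximation term, note first that parametrizing $Q_Y$ as a pushforward $G_*P_Z$ only shrinks the feasible set of marginals, so $R_{\mathcal{G}}(D)\ge R(D)$; and since $P_Z$ is absolutely continuous, every Borel marginal on $\mathbb{R}^m$ is a pushforward of $P_Z$ through some measurable map, which — because $\dist$ is $L_{\dist}$-Lipschitz, so that $Q\mapsto R(Q,D)$ is continuous in the weak (under mild tightness, $W_1$) topology — can be approximated by a continuous, hence arbitrarily-well neural-network-representable, map. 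Letting the width of the architecture grow with $n$ (the usual diagonal device for neural estimators), the approximation error vanishes; this is exactly the equivalence $R_{\mathcal{G}}(D)\to R(D)$ established in \cite{mapping_approach}, and it is where absolute continuity of $P_Z$ enters. The regime $D\ge D_{\mathrm{max}}$ is trivial ($R(D)=0$, while $0\le\widehat{R_\Theta(D)}_n\le$ the value at a good $\theta$, which tends to $0$), so assume $D<D_{\mathrm{max}}$.

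The substance is the estimation error, which I would get from a uniform strong law of large numbers (SLLN). Write $g_{\theta,\tilde\beta}(x):=\log\EE_{P_Z}\bracket*{e^{\tilde\beta\dist(x,G_\theta(Z))}}$, $\psi_\theta(\tilde\beta):=\tilde\beta D-\EE_{P_X}[g_{\theta,\tilde\beta}(X)]$, and $\widehat\psi_\theta^{(n)}(\tilde\beta):=\tilde\beta D-\tfrac1n\sum_{i=1}^n g_{\theta,\tilde\beta}(X_i)$, so that $\widehat{R_\Theta(D)}_n=\inf_\theta\sup_{\tilde\beta\le 0}\widehat\psi_\theta^{(n)}(\tilde\beta)$ and $R_{\mathcal{G}}(D)=\inf_\theta\sup_{\tilde\beta\le 0}\psi_\theta(\tilde\beta)$. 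Two reductions precede the SLLN. First, the inner supremum over the half-line reduces to a compact interval $[-B_0,0]$: since $\dist\ge 0$ each factor $e^{\tilde\beta\dist}\in(0,1]$, so $\psi_\theta(\tilde\beta)\ge\tilde\beta D$ with $\psi_\theta(0)=0$, while a Laplace-type estimate gives $\psi_\theta(\tilde\beta)/|\tilde\beta|\to -(D-\underline d_\theta)$ as $\tilde\beta\to-\infty$, where $\underline d_\theta$ is the $P_X$-average of the $P_Z$-essential infimum of $\dist(X,G_\theta(\cdot))$; since $D<D_{\mathrm{max}}$, one may (by continuity of the rate function) restrict to $\theta$ for which $\underline d_\theta<D$ by a fixed margin — the only $\theta$ relevant to the two extremal operations, since a near-optimal $\theta_0$ exists and all remaining $\theta$ yield a value $\ge R_{\mathcal{G}}(D)$ — and on that set the concave maps $\psi_\theta$ and $\widehat\psi_\theta^{(n)}$ are coercive with maximizers in a common compact interval. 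Second, on $\{\theta\}\times[-B_0,0]$ — taking $\Theta$ bounded, to which the general case reduces by truncation — the map $(\theta,\tilde\beta)\mapsto g_{\theta,\tilde\beta}(x)$ is Lipschitz in $(\theta,\tilde\beta)$ uniformly in $x$, using $L_{\dist}$-Lipschitzness of $\dist$, the Lipschitz dependence of $G_\theta$ on $\theta$, and $|e^a-e^b|\le|a-b|$ for $a,b\le 0$. Finite metric entropy of $\{g_{\theta,\tilde\beta}\}$ then yields, by a covering-number argument plus the pointwise SLLN,
\[
\sup_{\theta}\;\sup_{\tilde\beta\in[-B_0,0]}\;\Big|\tfrac1n\textstyle\sum_{i=1}^n g_{\theta,\tilde\beta}(X_i)-\EE_{P_X}[g_{\theta,\tilde\beta}(X)]\Big|\;\longrightarrow\;0\quad\text{a.s.}
\]
Uniform convergence of the integrands passes through both the inner $\sup_{\tilde\beta}$ and the outer $\inf_\theta$ (since $|\inf_\theta a_\theta-\inf_\theta b_\theta|\le\sup_\theta|a_\theta-b_\theta|$, and likewise for suprema), giving $\widehat{R_\Theta(D)}_n\to R_{\mathcal{G}}(D)$ a.s.; combined with the approximation bound this proves the claim.

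I expect the uniform SLLN to be the main obstacle, for two reasons: (i) trimming the Lagrange variable $\tilde\beta$ to a compact set \emph{uniformly in} $\theta$ without assuming moments of $P_X$ or boundedness of the networks — handled above by exploiting $\dist\ge0$, the Laplace asymptotics, the hypothesis $D<D_{\mathrm{max}}$, and the fact that only near-optimal $\theta$ contribute to $\inf_\theta$; and (ii) translating covering numbers of the parameter set $\Theta$ into covering numbers of the function class $\{x\mapsto\log\EE_{P_Z}[e^{\tilde\beta\dist(x,G_\theta(Z))}]\}$, which is exactly where the Lipschitz hypothesis on $\dist$ is used. The remaining pieces — the decomposition, the approximation term, and the passage of uniform limits through $\inf$ and $\sup$ — are either routine or imported from \cite{Dembo,mapping_approach}.
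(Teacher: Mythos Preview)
Your decomposition into approximation error plus estimation error is exactly the one the paper uses, and the treatment of the approximation term is the same in spirit: both arguments rely on $W_1$-continuity of the dual rate function together with universal approximation of the optimal marginal $Q_Y^*$ by neural pushforwards of $P_Z$. The paper carries this out more explicitly than you do, via a $W_1$ Lipschitz lemma for $y\mapsto e^{\beta\dist(x,y)}$ and an implicit-function-theorem step to control the drift of the inner maximizer $\beta^*$ when $Q_Y^*$ is perturbed to $Q_\theta$; you instead defer to \cite{mapping_approach} and weak continuity of $Q\mapsto R(Q,D)$. The substantive difference is in the estimation term: the paper does not prove a uniform SLLN at all but simply invokes the strong-consistency result for the plug-in estimator from \cite{RDplugin} as a black box, applied to the parametric family $\{Q_\theta\}_{\theta\in\Theta}$. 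Your covering-number route is more self-contained and makes the role of the Lipschitz hypothesis on $\dist$ transparent, at the cost of having to manage the compactification of $\tilde\beta$ and of $\Theta$ yourself. One point that would need care in your version: the claimed Lipschitz bound on $(\theta,\tilde\beta)\mapsto g_{\theta,\tilde\beta}(x)$ \emph{uniform in $x$} requires a lower bound on $\EE_{P_Z}\bigl[e^{\tilde\beta\dist(x,G_\theta(Z))}\bigr]$ that is uniform in $x$, which is not automatic when $\dist$ is unbounded on $\mathbb{R}^m$ --- some truncation of $\mathcal{X}$ or an integrability assumption is implicitly needed there, just as the paper's $W_1$ argument tacitly assumes the relevant expectations stay bounded away from zero.
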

    Note that while NERD satisfies strong consistency, this may not necessarily apply to settings encountered in practice. In practice, we use stochastic gradient descent to search over the function class $\Theta$ which may not necessarily find the minimum, and the expectation over $P_Z$ is estimated using Monte-Carlo methods.

      To use NERD, following \eqref{eq:NERD}, one can simply sample batches from $P_X$ and $P_Z$, solve the inner max of \eqref{eq:doubledual}  by solving  \eqref{eq:stationary} for $\tilde{\beta}^*$, and take a gradient step over the DNN parameters. The full algorithm is given in Algorithm~\ref{alg:NERD}. The code is available online\footnote{{\tt https://github.com/leieric/NERD-RCC}}.

      \subsection{Numerical Estimation Challenges}
      \label{sec:challenges}
      Note that although NERD is strongly consistent, the method suffers from estimation inaccuracies for large rates, which mirror similar estimation challenges of mutual information from samples \cite{mcallester}. The issues stem from the log expectation over $P_Z$, $\log \EE_{P_Z} \bracket*{e^{\tilde{\beta} \dist(X_i, G_\theta(Z))}}$, requiring at least $2^R$ samples to estimate accurately with a sample mean at rate $R$. To see this, note that the inner minimization in the min-max form in \eqref{eq:doubledual} is equivalent to the Donsker-Varadhan (DV) lower bound \cite{DV}. From \cite{Dembo}, if $\tilde{\beta^*}$ is the inner problem's maximizer, then 
      \begin{align}
          \sup_{\tilde{\beta} \leq 0} \,\, \tilde{\beta} D - \EE_{P_X} \bracket*{\log \EE_{Q_Y} \bracket*{e^{\tilde{\beta} \dist(X,Y)}}} &=  \mathop{\EE}_{P_X}\bracket*{ \mathop{\EE}_{Q^*_{Y|X}(\cdot|X)} \bracket*{\tilde{\beta^*}\dist(X,Y)} - \log \mathop{\EE}_{Q_Y} \bracket*{e^{\tilde{\beta^*} \dist(X,Y)}}}  \\
          &= \mathop{\EE}_{P_X} \bracket*{\sup_{f: \mathcal{Y} \rightarrow \mathbb{R}} \mathop{\EE}_{Q^*_{Y|X}(\cdot|X)} \bracket*{f(Y)} - \log \EE_{Q_Y} \bracket*{e^{f(Y)}}} \\
          &= \mathop{\EE}_{P_X} \bracket*{\DKL(Q^*_{Y|X}(\cdot|X)||Q_Y)} \\
          &= I(Q_Y, Q^*_{Y|X}),
      \end{align} 
      where $\frac{dQ^*_{Y|X=x}}{dQ_Y} (x, y) = \frac{e^{\tilde{\beta^*}\dist(x,y)}}{\EE_{Y' \sim Q_Y}\bracket*{e^{\tilde{\beta^*} \dist(x,y)}}}$, and the optimizing $f$ is given by $f^*(y) = \lambda \dist(x, y)$. Hence, if we use an empirical distribution for $Q_Y$, say with $M$ samples, then $I(Q_Y, Q^*_{Y|X}) \leq H(Q_Y) = \log_2 M$. Additionally, the authors in \cite{mcallester} show that the DV lower bound as well as any other $M$-sample estimate of a distribution-free high-confidence lower bound of mutual information is at most $O(\log M)$. 
      Therefore, we can expect NERD to estimate $R(D)$ for rates up to $\log_2 M$. In practice, we set $M=40,000$ for 32 x 32 images.

	\section{Experimental Results: NERD}
	\label{sec:results}
    In our experiments, we use synthetic data (i.e. Gaussian), MNIST digits, and Fashion MNIST (FMNIST) images to represent our source $X \sim P_X$. We use squared-error distortion for all cases, i.e. $\dist(x,y) = \|x-y\|_2^2$. In all cases, we have $n=60,000$ i.i.d. samples from $P_X$. We set the base distribution as $P_Z = \mathcal{N}(0, I_{m_z})$ and parametrize $G_\theta : \mathbb{R}^{m_z} \rightarrow \mathbb{R}^m$ with a fully connected neural network for the Gaussian case, and a deep convolutional architecture similar to the generator architecture used in DCGAN \cite{DCGAN} for the image data.  
    
    % Due to numerical instability of the objective due to the $\log$ term, we instead solve the following objective 
    %   \begin{equation}
    %         \label{eq:epsilon_obj}
    %        \inf_{\theta \in \Theta} \sup_{\tilde{\beta} \leq 0} \,\, \tilde{\beta} D - \mathop{\EE}_{\hat{P}^{(n)}_X} \bracket*{\log \paran*{\mathop{\EE}_{P_Z} \bracket*{e^{\tilde{\beta} \dist(X,G_\theta(Z))}}+\epsilon}},
    %   \end{equation}
    %   for some fixed but small $\epsilon > 0$.
    
    \begin{figure}[t]
    \begin{minipage}{0.475\textwidth}
        \centering
        \includegraphics[width=0.95 \textwidth]{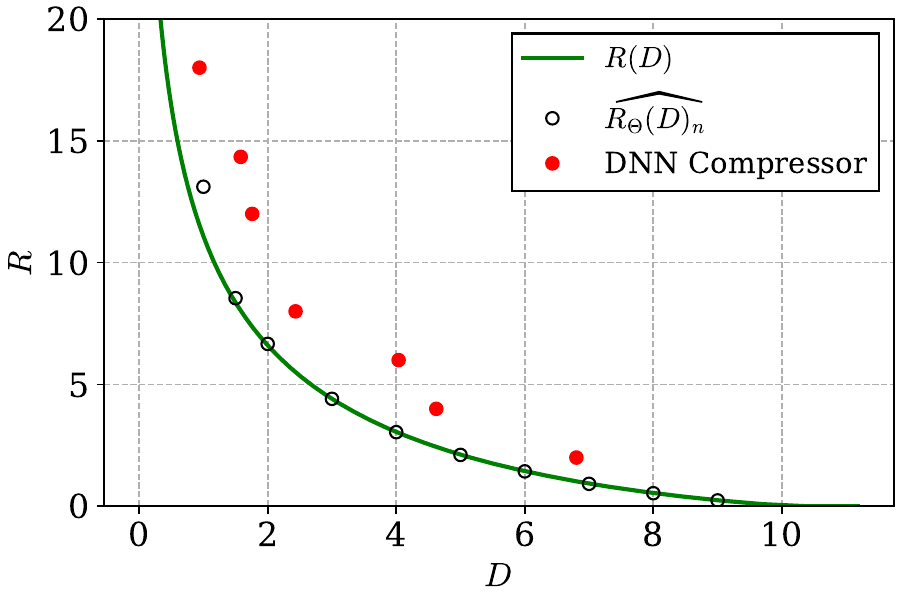}
        \caption{Estimated $\widehat{R_\Theta(D)}_n$ (NERD) on Gaussian data ($m=20$, $\sigma_k^2 = 4e^{-\frac{1}{16}k}$) compared with DNN compressors.}
        \label{fig:gaussian_NERD}
    \end{minipage}
    \hfill
    \begin{minipage}{0.475\textwidth}
        \centering
        \includegraphics[width=0.95 \textwidth]{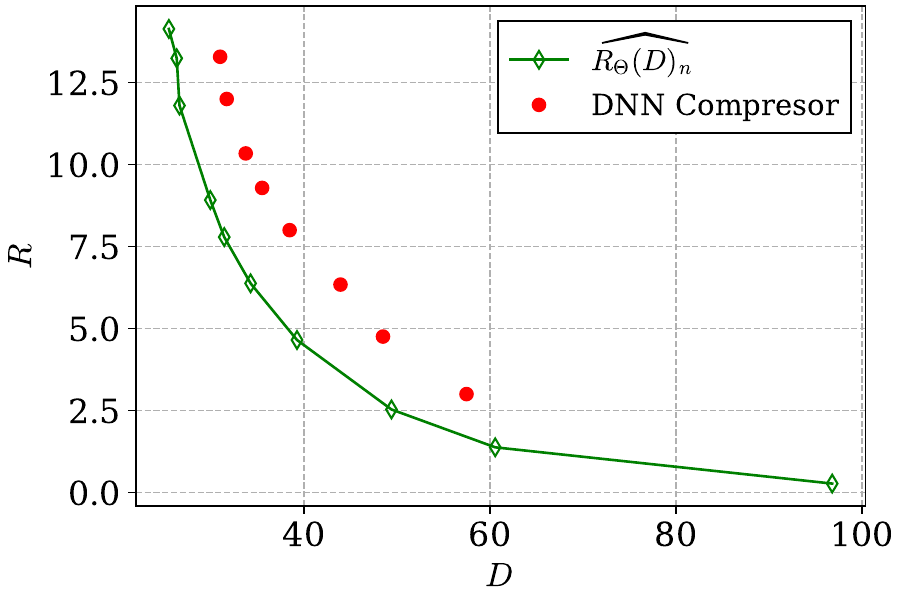}
        \caption{Estimated $\widehat{R_\Theta(D)}_n$ (NERD) of  SVHN images vs. DNN compressors.}
        \label{fig:SVHN_NERD}
    \end{minipage}
    \end{figure}
    
    \subsection{Synthetic Data}
    We first verify the NERD estimator on Gaussian data. Specifically, $P_X = \mathcal{N}(0, \Sigma)$ is a multivariate Gaussian with $m=20$ dimensions. Let $\Sigma = V \mathrm{diag}(\sigma_1^2, \dots, \sigma_{m}^2) V^\top$ be the eigendecomposition of $\Sigma$, with $V$ orthogonal. In this case, the rate-distortion function of $P_X$ has an analytical form and is given by \cite{CoverThomas}
	\begin{equation}
	    \label{eq:gaussian_rd}
	    R(D) = \begin{cases} \sum_{\substack{i \in [m]: \\ \sigma_i^2 > \lambda}} \frac{1}{2} \log \frac{\sigma_i^2}{\lambda}, & D \leq \sum_{i=1}^m \sigma_i^2 \\
	    0, & \text{o.w.}\end{cases}
	\end{equation}
	where $\lambda > 0$ satisfies the reverse water-filling condition
	\begin{equation}
	    \label{eq:rev_wf}
	    \lambda |\{i: \sigma_i^2 > \lambda\}| + \sum_{i \in [m]: \sigma_i^2 \leq \lambda} \sigma_i^2 = D.
	\end{equation}
	To evaluate NERD, we choose $\sigma_k^2 = 4e^{-\frac{1}{16}k}$, and pick an orthogonal matrix $V$ to form $\Sigma$. We generate samples from $\mathcal{N}(0, \Sigma)$ and apply Alg.~\ref{alg:NERD}. As shown in Fig.~\ref{fig:gaussian_NERD}, for this choice of $P_X$, NERD can accurately estimate $R(D)$ for rates below 10 bits. For higher rates, the estimate suffers from numerical instability as discussed in the previous section. We also provide a second Gaussian example in the appendix, shown in Fig.~\ref{fig:gaussian2}.

    \subsection{Real-World Data}
     We apply NERD to MNIST and FMNIST datasets, which are single-channel images, as well as SVHN (Street View House Numbers) \cite{SVHN} which contain color RGB images. We plot the estimated rate-distortion curves in Fig.~\ref{fig:plug_in} for MNIST, Fig.~\ref{fig:RD_only} for FMNIST, and Fig.~\ref{fig:SVHN_NERD} for SVHN. In all cases, the curve appears to satisfy the convex and strictly decreasing properties of the rate-distortion function. Using our estimate of $R(D)$, we can use DNN compressors of the autoencoder type \cite{Balle2017, Theis2017a, SoftToHardVQ} with quantization to see how they perform compared to the fundamental limits. Additionally, we see that DNN compressors closely follow the same trend as the estimated rate-distortion function, and are within several bits of optimality inside the achievable region. However, it remains difficult to conclude in this case whether or not DNN compressors are optimal on these datasets. The gap of several bits could be potentially attributed to the fact that the DNN compressors are one-shot, whereas $R(D)$ is achievable only under asymptotic blocklengths. While one-shot achievable regions of $R > R(D) + \log(R(D)+1)+5$ are known \cite{sfrl}, lower bounds tighter than $R(D)$ for general sources $P_X$ are not as clear. Either way, it remains to be seen whether other computationally feasible source codes could be designed to  perform closer to the rate-distortion limit. In later sections, we discuss learning-based block codes on real-world data that empirically perform closer to $R(D)$ for certain regimes of the rate-distortion tradeoff.
    
%   \textcolor{red}{(see my comments on Skype)} We can interpret this result further. We can interpret DNN compressors as directly attempting to solve the one-shot operational rate-distortion function $R^*(D) := \inf \{R: \exists R\mathrm{-rate~code} ~(f,g)~\mathrm{s.t.}~\EE_{P_X}[\dist(X,g(f(X)))]\leq D \}$, known to be at most $\log R(D)+O(1)$ bits above $R(D)$ \cite{sfrl}, by parametrizing lossy source codes with neural networks. On the other hand, our rate-distortion function estimator attempts to find the \textit{information} rate-distortion function. We know from Shannon's work \cite{shannon48, shannonRD} that they are equivalent under asymptotic blocklengths. Since DNN compressors in the literature implement one-shot codes, the gap of several bits seen in Fig.~\ref{fig:RD_only} can be interpreted as the redundancy, which is known to behave $O(\frac{\log k}{k})$ if $k$ is the blocklength \cite{redundancy,sfrl}. Hence, we hypothesize that there is potential for gains to be made by DNN compressors if one considers neural architectures that compress blocks of samples rather than individual samples.
    
    \subsection{Comparison with Blahut-Arimoto}
    We compare solving \eqref{eq:NERD} to a baseline scheme that uses Blahut-Arimoto on discretized input and output alphabets. On low-dimensional input alphabets, Blahut-Arimoto will perform accurately. But for high-dimensional alphabets, it is unclear how to discretize the continuous space. For image datasets, which are high-dimensional, we have $\mathcal{X} = \{X_1,\dots,X_n \in [0,1]^m\}$ and let the source PMF be $\frac{1}{n}\sum_{i=1}^n \delta_{X_i}(x)$ and choose a discretization for $\mathcal{Y} \subseteq [0,1]^m$ to define an output marginal PMF for Blahut-Arimoto. We attempt to choose the discretization for $\mathcal{Y}$ to be the same as the source, i.e. $\mathcal{Y} = \{X_1,\dots,X_n \in [0,1]^m\}$ is exactly the support of the samples. Such a scheme should converge to the true rate-distortion function as $n \rightarrow \infty$ assuming the true continuous alphabets are both $[0,1]^m$. However, we demonstrate that Blahut-Arimoto fails when the number of samples is finite. Firstly, we are limited by the number of samples (60,000 at most with both datasets). Even with a large number of samples, we see that, given in Fig.~\ref{fig:plug_in} and \ref{fig:RD_only}, doing so does not work particularly well, and the trend is completely off compared to NERD and the DNN codes. It fails to extrapolate to the true rate-distortion function of the true source, and traces the rate-distortion curve for the discrete uniform empirical distribution which we see achieves zero distortion at $R=H(\hat{P}^{(n)}_X) = \log_2 n$. As $n$ grows larger, we would expect the curve traced by this scheme to ``rotate'' clockwise to the true rate-distortion curve (which requires infinite rate at zero-distortion for continuous sources), but this scheme can only rotate to where the zero-distortion rate reaches $\log_2(60,000) \approx 15.87$.  In contrast, NERD is able to follow the same trend of the operational rate-distortion curve estimated by DNN compressors, and matches known characteristics of $R(D)$ as described in the next section. 
    
    \begin{figure}[t]
    \begin{minipage}{0.45\textwidth}
        \centering
        \begin{subfigure}{0.95\textwidth}
             \centering
            \includegraphics[width=0.75\textwidth]{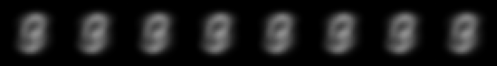}
            \caption{$R=0.28$ bits, $D=54.95$.}
            \label{fig:R0}
        \end{subfigure}
        \\
        \vspace{0.5em}
        \begin{subfigure}[b]{0.95\textwidth}
             \centering
            \includegraphics[width=0.75\textwidth]{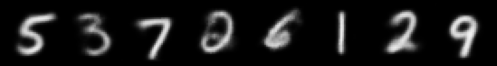}
            \caption{$R = 16.82$ bits, $D = 10$.}
            \label{fig:R8}
        \end{subfigure}
        \caption{Samples from trained $Y$-marginal $Q^*_Y$ (MNIST). As $R \rightarrow 0$, $Q^*_Y$ generates the mean image, achieving $D=D_{\max}$.}
    \end{minipage}
    \hfill
    \begin{minipage}{0.45\textwidth}
        \centering
        \begin{subfigure}{0.95\textwidth}
             \centering
            \includegraphics[width=0.75\textwidth]{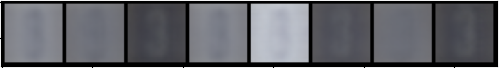}
            \caption{$R=1.37$ bits, $D=60.55$.}
        \end{subfigure}
        \\
        \vspace{0.5em}
        \begin{subfigure}[b]{0.95\textwidth}
             \centering
            \includegraphics[width=0.75\textwidth]{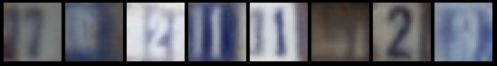}
            \caption{$R = 11.80$ bits, $D = 26.60$.}
        \end{subfigure}
        \caption{Samples from trained $Y$-marginal $Q^*_Y$ (SVHN). As $R \rightarrow 0$, $Q^*_Y$ generates the mean image, achieving $D=D_{\max}$.}
        \label{fig:rates_SVHN}
    \end{minipage}
    \end{figure}

    \subsection{Comparison with Empirical Sandwich Bounds}
    \begin{figure}[t]
        \centering
        \includegraphics[width=0.5\linewidth]{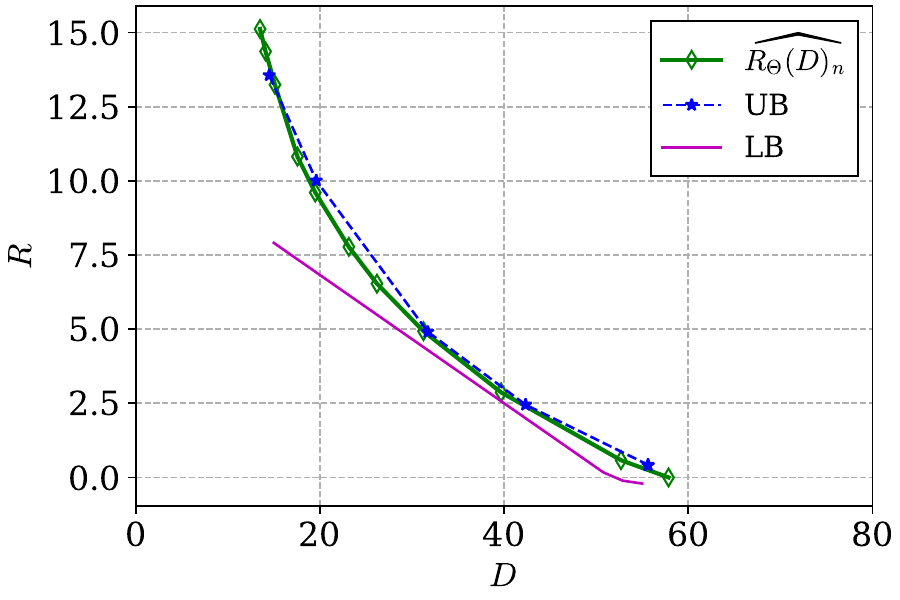}
        \caption{NERD vs. empirical sandwich bounds \cite{yang2021towards} on MNIST. }
        \label{fig:bounds}
    \end{figure}

    This section provides comparisons with other neural network-based bounds on $R(D)$; namely, those provided in \cite{yang2021towards}. It is important to note that NERD itself is an upper bound of $R(D)$. This can be seen since the inner maximization simply computes the mutual information between $Q_Y$ and $Q_{Y|X}$ in closed form. The outer mininimzation restricts the search over distributions $Q_Y$ to those parameterized by the neural network function class. Thus, any fixed $Q_Y$ parameterized by a neural network directly yields an upper bound of $R(D)$. In contrast, the upper bound in \cite{yang2021towards} is a variational bound. However, \cite{yang2021towards} show that it is tight. In Fig.~\ref{fig:bounds} we show how the sandwich bounds provided in \cite{yang2021towards} compare to NERD on the MNIST dataset. Indeed, it can be seen that on certain parts of the rate-distortion curve, NERD matches up with the variational upper bound, indicating that the tightness of their bound is likely achieved for these rate points. For the lower bounds, there is a gap when compared to the NERD estimate. However, we encountered instability from the implementation provided from \cite{yang2021towards}, when computing the lower bound. It is interesting to note that NERD's formulation is very similar to the lower bound \cite[Eq.~4]{yang2021towards}, which suggests two things. First, NERD chooses a simpler parameterization to solve $R(D)$. Second, our discussion from Sec.~\ref{sec:challenges} explained why any lower bound of $R(D)$ requires sample complexity exponential in the rate, which corroborates the analysis and results of the lower bound in \cite{yang2021towards}, who struggled to bring the lower bound above log of the number of samples used.

    \subsection{Samples from the Optimal Reproduction Distribution}
    We now illustrate generated samples from the (approximately) optimal reproduction distribution $Q_Y^*$, parametrized by the trained neural network $G_{\theta^*}$, where $\theta^*$ neural network parameters that are the minimizers of NERD. In other words, $Q_Y^* = G_{\theta^* *} P_Z$, the pushfoward of $P_Z$ through $G_{\theta^*}$. We show that it indeed aligns with the behavior of the rate-distortion function. Let $D_{\max}:= \min_{y \in \mathcal{Y}} \EE_{P_X}[\dist(X, y)]$ be the distortion achievable at zero rate \cite[Ch.~9]{Yeung2002AFC}, i.e. $R(D_{\max}) = 0$. This is the best distortion that can be possibly achieved when there is no information about the source, where the reproduction is simply the best constant estimate of $X$. When $\dist$ is squared-error, and $\mathcal{X}=\mathcal{Y}$, the best constant estimate is the mean  $\mu_{X}=\EE_{P_X}[X]$, with $D_{\max} = \EE_{P_X}\bracket*{\|X-\mu_{X}\|_2^2}$.  In the MNIST case, the samples generated from the generator neural network at $R=0.28$, shown in Fig.~\ref{fig:R0}, consistently generate the ``mean image''. Computing $D_{\max}$ with an empirical average turns out to be $\approx 55$ (under mild preprocessing of the MNIST dataset), which matches the zero-rate point in Fig.~\ref{fig:plug_in}. In contrast, samples generated from a trained generator at higher rate, shown in Fig.~\ref{fig:R8}, appear more similar to the original MNIST images and produce more modes of the distribution. A similar phenomenon occurs with SVHN and FMNIST as well, shown in Fig.~\ref{fig:rates_SVHN} and \ref{fig:rates_fmnist}. In comparison, Blahut-Arimoto does not produce this phenomenon and fails to intersect the $D$-axis at $D_{\mathrm{max}}$. 

    \subsection{Generality of Distortion Function}
    In this section we demonstrate that the method reliably works for general distortion functions. In order for the backpropagation operation to function correctly, the only requirement is that the distortion function be differentiable in its arguments. To demonstrate this, we estimate the rate-distortion function of the same Gaussian source as described previously, but instead of squared-error distortion, we now apply general squared $\ell^p$ distances $\dist_p(x,y) = \|x-y\|_p^2$. Due to the fact that $\|v\|_p \leq \|v\|_q$ when $0 < q \leq p < \infty$, we have that 
    \begin{equation}
        \inf_{\substack{P_{Y|X}:  \EE_{P_{X,Y}}[\dist_p(X,Y)]  \leq D}} I(X;Y) \leq \inf_{\substack{P_{Y|X}:  \EE_{P_{X,Y}}[\dist_q(X,Y)]  \leq D}} I(X;Y).
        \label{eq:RD_ineq}
    \end{equation} 
    We show the rate-distortion function of the Gaussian source using $\dist_1(x,y)$, $\dist_2(x,y)$, and $\dist_3(x,y)$. Since an analytical form of the Gaussian rate-distortion function is not known for $\dist_1$ and $\dist_3$, we only plot the NERD-estimated rate-distortion functions here. 
    \begin{figure}
        \centering
        \includegraphics[width=0.5\linewidth]{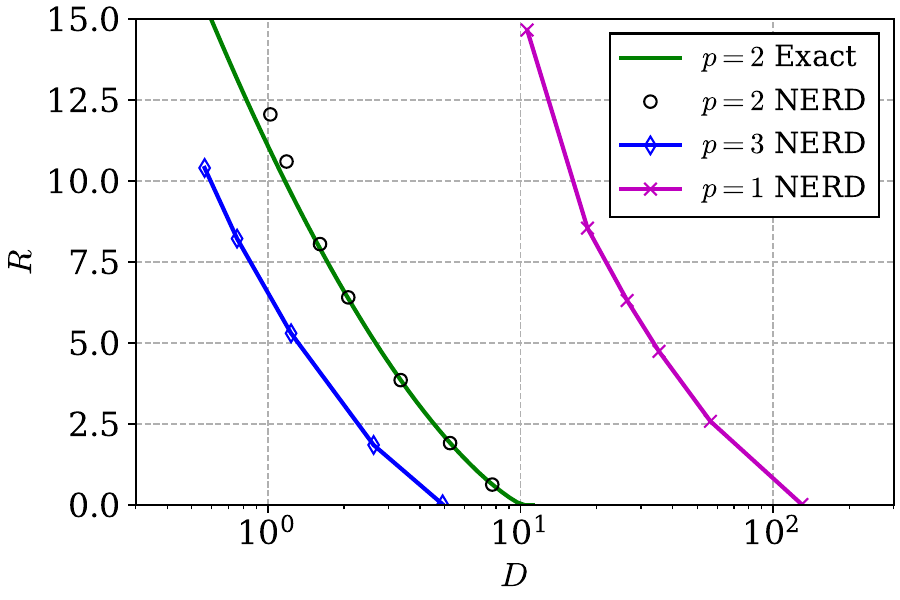}
        \caption{Gaussian rate-distortion function for $\dist_p(x,y) = \|x-y\|_p^2$ for $p=1,2,3$.}
        \label{fig:gaussian_various}
    \end{figure}
    As observed in Fig.~\ref{fig:gaussian_various}, the inequality in \eqref{eq:RD_ineq} is reflected in the 3 rate-distortion curves.

    % \section{Upper Bounds}
    \section{One-Shot Operational Lossy Source Codes}
    \label{sec:RCC}
    Now that we have the capability to calculate the fundamental limit of lossy source coding, and also recover an approximately optimal reproduction distribution $Q_Y^*$ parametrized by a neural network, a natural question to ask is whether it is possible to use $Q_Y^*$ to construct an operational compressor. Indeed, we will see in this section that $Q_Y^*$ can be used to construct a compression scheme with guarantees on the achievable rate and distortion, and empirically perform similar to those of DNN compressors. We first discuss reverse channel coding, which is the technique that underlies the lossy compression scheme, unlike DNN compressors which directly model the encoder and decoder with neural networks.
    
    \subsection{Reverse Channel Coding}
    % \textcolor{red}{(rewrite this part, maybe add a figure, maybe add other names that are used for this problem (mostly because we are submitting to a journal on IT and people have previously worked on this problem. also make sure the only requirement you mention is not that the sample is of dist QY by is jointly distribution with x according to $Q_{Y|X=x}$))}
    The one-shot reverse channel coding (RCC) problem, described in \cite{theis2021algorithms}, consists of reproducing a sample $x \sim P_X$ but allowing it to be corrupted by some noise. It is also known as the channel simulation problem and has been investigated previously in \cite{sfrl, harsha, entanglement, DCS, quantumreverse, bravermangarg, cuff2008, poissonmatching} and references therein. Precisely, given a realization $x \sim P_X$, the sender wishes to reproduce a sample $y$ at the receiver such that it follows a prespecified conditional distribution $Q_{Y|X=x}$, as if the realization $x$ had gone through a channel $Q_{Y|X}$ (see Fig. \ref{fig:RCC}). The sender communicates a message $M$ which the decoder uses to reproduce $y$, with the goal of ensuring $y \sim Q_{Y|X=x}$ and that the expected description length $\EE[L(M)]$ per source symbol, known as rate,  communicated from the sender to receiver is minimal. In our setting, we assume that the sender and receiver share unlimited common randomness, denoted as $U$. 
    
    The relationship to lossy source coding is as follows. In lossy source coding, we wish to approximately represent some sample $x \sim P_X$ with $y$ such that both expected description length and expected distortion are minimized. While RCC is not explicitly concerned with distortion, it is clear that if one has a RCC scheme for channel $Q_{Y|X}$ with rate $\EE[L(M)] \leq R$, then the expected distortion incurred is $\EE_{P_X Q_{Y|X}}[\dist(X,Y)]$. Hence, such a RCC scheme would yield a lossy one-shot code that achieves a rate of $R$ and distortion $\EE_{P_X Q_{Y|X}}[\dist(X,Y)]$. Furthermore, suppose that $Q^*_{Y|X}$ is the channel that minimizes the rate-distortion function \eqref{eq:RD} at distortion level $D$. Any RCC scheme on $Q^*_{Y|X}$ achieves an expected distortion $\EE_{P_X Q^*_{Y|X}}[\dist(X,Y)] \leq D$. What about the rate? It turns out that there are RCC schemes \cite{sfrl, theis2021algorithms} on $Q^*_{Y|X}$ that are guaranteed to achieve rates
    \begin{equation}
        \label{eq:one_shot_RD}
        \EE[L(M)] \leq R(D) + \log (R(D)+1) + 5,
    \end{equation}
    which we now describe.
    
    \begin{figure}
        \centering
        \includegraphics[width=0.65\textwidth]{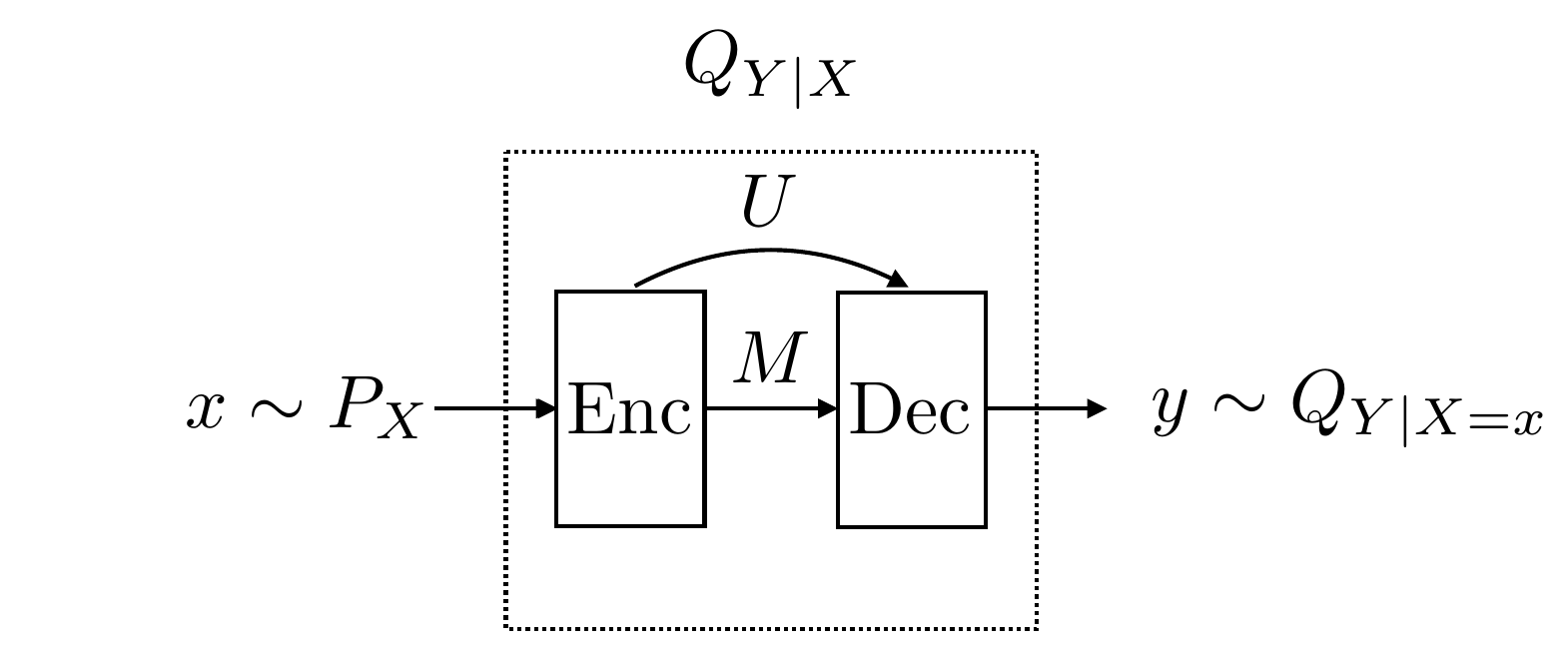}
        \caption{Diagram of reverse channel coding, or channel simulation. Given $x \sim P_X$, the goal is to generate $y \sim Q_{Y|X=x}$, effectively simulating the channel $Q_{Y|X}$, by transmitting some information $M$ with minimal description length $\EE[L(M)]$. Common randomness $U$ between the encoder and decoder is assumed.}
        \label{fig:RCC}
    \end{figure}
    
    \subsubsection{PFR}
    Suppose we have a channel $Q_{Y|X}$. The Poisson functional representation (PFR), proposed by Li \& El Gamal \cite{sfrl}, is one such RCC scheme with guarantees on the rate required. In particular, in order to transmit a sample $x \sim P_X$, both the encoder and decoder (assuming shared common randomness $U$) first sample a marked Poisson process $\{(T_i, Y_i)\}_{i =1}^\infty$, such that $T_i - T_{i-1} \stackrel{\textrm{i.i.d.}}{\sim} \mathrm{Exp}(1)$ and $Y_i \stackrel{\textrm{i.i.d.}}{\sim} Q_Y$, where $Q_Y$ is the marginal of $Y$ over the joint distribution $(X,Y) \sim P_X Q_{Y|X}$. The encoder then computes an index 
    \begin{equation}
        K = \argmin_{i \in \mathbb{N}} T_i\cdot \frac{dQ_Y}{dQ_{Y|X}(\cdot|x)}(Y_i),
        \label{eq:PFR_obj}
    \end{equation} 
    and encodes it using a lossless source code. The decoder recovers $K$, and outputs $Y_K$, which is distributed with $Q_{Y|X=x}$. This scheme has the guarantee that 
    \begin{equation}
        H(K) \leq I(X;Y) + \log(I(X;Y)+1)+4.
        \label{eq:PFR_rate}
    \end{equation}
    In practice, one can encode $K$ using a Huffman codebook designed for the Zipf distribution with PMF $q(k) \propto k^{-(1+1/(I(X;Y)+e^{-1}\log e + 1))}$, which guarantees a rate equivalent to the rate in \eqref{eq:PFR_rate} plus 1 bit \cite{sfrl}. Applying PFR to the rate-distortion optimal channel $Q_{Y|X}^*$ thus achieves the rate guarantee in \eqref{eq:one_shot_RD}.
    
    One potential issue in the practical implementation of the PFR is solving \eqref{eq:PFR_obj}, which requires solving a discrete and infinite optimization problem. To practically implement this, one must use a finite number of samples, but as described in \cite{theis2021algorithms}, the guarantee on the rate in \eqref{eq:PFR_rate} does not necessarily hold when a finite number of samples is used. 
    \subsubsection{ORC}
    To alleviate this issue, Theis \& Yosri \cite{theis2021algorithms} propose ordered random coding (ORC). Rather than weighting the density ratios in \eqref{eq:PFR_obj} by Poisson arrival times, ORC weights them with sorted exponential random variables. 
    The two RCC methods, PFR and ORC, can be summarized as follows \cite[Thm.~3]{theis2021algorithms}. Fix some number of samples $N$. Sample $\{X_i\}_{i=1}^N \stackrel{\textrm{i.i.d.}}{\sim} \mathrm{Exp}(1)$, and $\{Y_i\}_{i=1}^N \stackrel{\textrm{i.i.d.}}{\sim} Q_Y$, where $Q_Y$ is defined as above. Then, the encoder generates the cumulative weights $\{W_i\}_{i=1}^N$, defined to be 
    \begin{equation}
    \label{eq:weights}
        W_i = \begin{cases} \sum_{j=1}^i X_j, & \textrm{if PFR} \\ \sum_{j=1}^i \frac{N}{N-j+1} X_j, & \textrm{if ORC}\end{cases}
    \end{equation} The encoder sends 
    \begin{equation}
        \label{eq:index}
        K = \argmin_{1 \leq i \leq N} W_i \cdot \frac{dQ_Y}{dQ_{Y|X}(\cdot|x)}(Y_i),
    \end{equation}
    and the decoder outputs $Y_K$. In contrast with PFR, ORC can be shown to achieve the same rate as PFR (given in \eqref{eq:PFR_rate}) \cite[Cor. 1]{theis2021algorithms}, but the rate guarantee still holds for some finite $N$ \cite[Thm. 3]{theis2021algorithms}. This guarantees the rate in \eqref{eq:PFR_rate} even in practical usage, and so applying ORC to $Q_{Y|X}^*$ will again achieve the rate in \eqref{eq:one_shot_RD}. 
    
    \subsection{Lossy Compression Scheme via Reverse Channel Coding}
    For any expected distortion tolerance $D$, let $Q_{Y|X}^*$ be the rate-distortion achieving channel, and $Q_Y^*$ be the rate-distortion achieving reproduction distribution (simply the $Y$-marginal of the joint $Q_{Y|X}^* P_X$). Applying PFR or ORC to $Q_{Y|X}^*$ achieves a rate no more than $R(D) + \log(R(D)+1)+5$ and expected distortion no more than $D$. The following proposition exemplifies a nice interpretation when the rate-distortion achieving distributions are used for RCC.
    \begin{proposition}
        \label{RCCenc}
        When using $Q_{Y|X}^*$ and $Q_Y^*$ for RCC to compress $x \sim P_X$, the encoder computes the index 
        \begin{equation}
            K = \argmin_{i} \{\dist(x, Y_i) - (\beta^*)^{-1} \ln W_i \}
        \end{equation}
        where $\beta^* < 0$ is the slope of $R(D)$ at the point $D = \EE_{P_X Q_{Y|X}^*}[\dist(X,Y)]$, and $W_i$'s are generated via \eqref{eq:weights}.
    \end{proposition}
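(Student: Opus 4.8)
The plan is to reduce the generic reverse-channel-coding index \eqref{eq:index} to the stated form in two moves: first identify the Radon--Nikodym derivative $dQ_Y^*/dQ_{Y|X=x}^*$ explicitly from the variational structure of the rate-distortion problem, and then simplify the resulting $\argmin$ by discarding an index-independent positive factor and composing with monotone maps.

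\smallskip
\noindent\emph{Step 1 (form of the optimal channel).} I would first argue that the rate-distortion optimal channel at distortion level $D$ is the Gibbs kernel relative to its own output marginal: there is a constant $\tilde\beta^* \le 0$, not depending on $x$, with
\begin{equation}
    \frac{dQ_{Y|X=x}^*}{dQ_Y^*}(y) = \frac{e^{\tilde\beta^* \dist(x,y)}}{\EE_{Y' \sim Q_Y^*}\bracket*{e^{\tilde\beta^* \dist(x,Y')}}}.
    \label{eq:prop_gibbs}
\end{equation}
In the discrete case this is exactly \eqref{eq:BA_1} (with $\tilde\beta^* = -\beta$); in general it follows from the Rate Function Duality lemma: fixing the optimal marginal $Q_Y^*$, the inner minimization in \eqref{eq:double} is the KL-divergence minimization defining $R(Q_Y^*, D)$, whose minimizer is the tilted kernel \eqref{eq:prop_gibbs}, and its Lagrange multiplier is, by the stationarity condition \eqref{eq:stationary} together with the envelope theorem, precisely the slope of $R(D)$ at $D = \EE_{P_X Q_{Y|X}^*}[\dist(X,Y)]$, i.e. $\tilde\beta^* = \beta^*$. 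Inverting \eqref{eq:prop_gibbs} gives
\begin{equation}
    \frac{dQ_Y^*}{dQ_{Y|X=x}^*}(y) = e^{-\beta^* \dist(x,y)}\cdot \EE_{Y' \sim Q_Y^*}\bracket*{e^{\beta^* \dist(x,Y')}},
    \label{eq:prop_ratio}
\end{equation}
which is well defined $Q_{Y|X=x}^*$-a.e., since \eqref{eq:prop_gibbs} exhibits $Q_{Y|X=x}^* \ll Q_Y^*$ with a strictly positive density.

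\smallskip
\noindent\emph{Step 2 (simplification).} Substituting \eqref{eq:prop_ratio} into \eqref{eq:index} with $Q_Y = Q_Y^*$, the factor $\EE_{Y'\sim Q_Y^*}[e^{\beta^* \dist(x,Y')}]$ is strictly positive and independent of $i$, hence drops out of the $\argmin$, leaving $K = \argmin_i W_i e^{-\beta^* \dist(x,Y_i)}$. Since $W_i > 0$ almost surely (the $W_i$ are partial sums of positive weights in \eqref{eq:weights}) and $\ln$ is strictly increasing, $K = \argmin_i \{\ln W_i - \beta^* \dist(x,Y_i)\}$; and because $\beta^* < 0$, dividing the objective by $-\beta^* > 0$ preserves the $\argmin$ and yields $K = \argmin_i \{\dist(x,Y_i) - (\beta^*)^{-1}\ln W_i\}$, which is the claim, with the $W_i$ as in \eqref{eq:weights} because exactly the same PFR/ORC procedure is invoked.

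\smallskip
The main obstacle is Step 1: making rigorous the Gibbs form \eqref{eq:prop_gibbs} in the continuous setting (where \eqref{eq:BA_1} is only formal) and the identification $\tilde\beta^* = \beta^*$, together with the measure-theoretic details guaranteeing that the reciprocal Radon--Nikodym derivative in \eqref{eq:prop_ratio} is defined almost everywhere. Once \eqref{eq:prop_ratio} is in hand, the remainder is bookkeeping: removing an index-independent positive factor from an $\argmin$ and composing with monotone transformations of the objective.
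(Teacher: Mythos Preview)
Your proposal is correct and follows essentially the same route as the paper: invoke the Gibbs form of the rate-distortion optimal channel (the paper simply cites \cite{Dembo} for this, whereas you sketch its derivation from the Rate Function Duality lemma and stationarity), substitute into \eqref{eq:index}, drop the index-independent normalizer, take $\ln$, and rescale by $-1/\beta^*>0$. Your additional care about the measure-theoretic well-definedness of the reciprocal density and the identification of $\tilde\beta^*$ with the slope is more explicit than the paper's treatment, but the argument is otherwise the same.
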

    \begin{remark}
    Prop.~\ref{RCCenc} can be interpreted as the encoder searching for the sample $Y_i$ that is closest to $x$ under distortion measure $\dist$, while simultaneously regularizing the size of the index used (and therefore its entropy). The amount of regularization is proportional to the tradeoff between rate and distortion in the rate-distortion function $R(D)$.
    \end{remark}
    \begin{proof}
        We have that $K = \argmin_i W_i \frac{dQ^*_Y}{dQ_{Y|X}^*(\cdot | x)}(Y_i)$. Let $P_{X,Y} = Q^*_{Y|X} P_X$ be the joint measure. The optimal density ratio is then given by \cite{Dembo} as
        \begin{align}
            \frac{dQ_{Y|X}^*(\cdot | x)}{dQ^*_Y}(y) &= \frac{dP_{X,Y}}{dQ^*_Y dP_X}(y) \\ &= \frac{e^{\beta^*\dist(x,y)}}{\EE_{Y'\sim Q_Y^*}[e^{\beta^*\dist(x,Y')}]}
        \end{align} 
        where $\beta^* = \argmax_{\tilde{\beta} \leq 0} \tilde{\beta}D - \EE_{P_X}\bracket*{\log \EE_{Q_Y^*}\bracket*{e^{\tilde{\beta}\dist(X,Y)}}}$. Hence, using \eqref{eq:index}, we have that 
        \begin{align}
            K &= \argmin_i W_i \frac{dQ^*_Y}{dQ_{Y|X}^*(\cdot | x)}(Y_i) = \argmin_i W_i \frac{\EE_{Y'\sim Q_Y^*}[e^{\beta^*\dist(x,Y')}]}{e^{\beta^*\dist(x,Y_i)}} \\
            &= \argmin_i \frac{W_i} {e^{\beta^*\dist(x,Y_i)}} = \argmin_i \ln \frac{W_i} {e^{\beta^*\dist(x,Y_i)}} \\
            &= \argmin_i \{\ln W_i - \beta^*\dist(x,Y_i)\} = \argmin_i \{\dist(x, Y_i) - (\beta^*)^{-1} \ln W_i \}
        \end{align}
        where in the last step we have rescaled the objective by $-\frac{1}{\beta^*} \geq 0$.
    \end{proof}
    
    Therefore, given $n$ samples from $P_X$, one can design a lossy one-shot code with (approximately) rate \eqref{eq:one_shot_RD} and distortion $D$ as follows: 
    \begin{enumerate}
        \item Apply NERD to the samples from $P_X$, which returns a trained neural network $G_{\theta^*}$ and slope $\beta^*$ as solutions of the min-max problem in \eqref{eq:NERD}. Form an approximation to $Q_Y^*$ as the pushforward of $P_Z$ through $G_{\theta^*}$, i.e. $G_{\theta^* *} P_Z$. Sampling from this distribution can be done by first sampling $Z \sim P_Z$, and outputting $G_{\theta^*}(Z)$.
        \item To compress $x \sim P_X$, apply Alg.~\ref{alg:pfr_enc} with $G_{\theta^* *} P_Z$, slope parameter $\beta^*$, rate parameter as the estimate of the rate-distortion function $\widehat{R_{\Theta}(D)}_n$, and some random seed $r$. This returns a bit string $b$.
        \item To decompress, apply Alg.~\ref{alg:pfr_dec} with the same parameters as the previous step. 
    \end{enumerate}
    We will refer to the above learned compression procedure as \textit{NERD-RCC}. 
    
    \begin{algorithm}[tb]
       \caption{Lossy Encoding Scheme via RCC}
       \label{alg:pfr_enc}
        \begin{algorithmic}
          \STATE {\bfseries Input:} Optimal $\mathcal{Y}$-marginal distribution $Q_Y^*$, rate-distortion slope parameter $\tilde{\beta} < 0$, rate parameter $C$, number of samples $N$, random seed $r$, sample to compress $x$
          \STATE Generate weights $\{W_i\}_{i=1}^N$ according to \eqref{eq:weights}
        %   \STATE Generate Poisson arrivals $\{T_i\}_{i=1}^N$, such that $T_i - T_{i-1} \sim \mathrm{Exp}(1)$, with $T_0=0$
          \STATE Sample $\{Y_i\}_{i=1}^N \stackrel{\mathrm{i.i.d.}}{\sim} Q_Y^*$ using $r$
          \STATE Let $K = \argmin_{i=1,\dots,N} \{\dist(x, Y_i) - \tilde{\beta}^{-1} \ln W_i \}$ 
          \STATE Encode $K$ using a Huffman code for the distribution with mass $q(k) \propto k^{-(1+1/(C+e^{-1}\log e + 1))}$, into a bit string $b$
          \RETURN $b$
        \end{algorithmic}
    \end{algorithm}
    
    \begin{algorithm}[tb]
       \caption{Lossy Decoding Scheme via RCC}
       \label{alg:pfr_dec}
        \begin{algorithmic}
          \STATE {\bfseries Input:} Optimal $\mathcal{Y}$-marginal distribution $Q_Y^*$, rate parameter $C$, number of samples $N$, random seed $r$, bit string $b$
          \STATE Decode $K$ using the Huffman code for the distribution with mass $q(k) \propto k^{-(1+1/(C+e^{-1}\log e + 1))}$
          \STATE Sample $\{Y_i\}_{i=1}^N \stackrel{\mathrm{i.i.d.}}{\sim} Q_Y^*$ using $r$
          \RETURN $Y_K$
        \end{algorithmic}
    \end{algorithm}
    
    \subsection{Experimental Results: NERD-RCC}
    We evaluate the RCC schemes on synthetic Gaussian, MNIST, and FMNIST data.
    \begin{figure}[!t]
    \begin{minipage}{0.475\textwidth}
        \centering
        \includegraphics[width=0.95\linewidth]{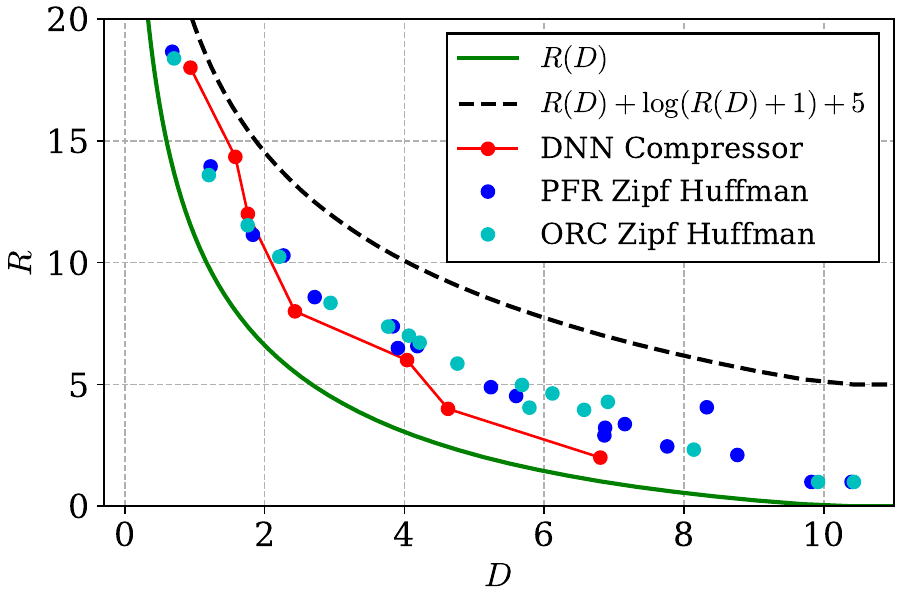}
        \caption{Gaussian source.}
        \label{fig:gaussian_RCC}
    \end{minipage}
    \hfill
    \begin{minipage}{0.475\textwidth}
	     \centering
         \includegraphics[width=0.95\textwidth]{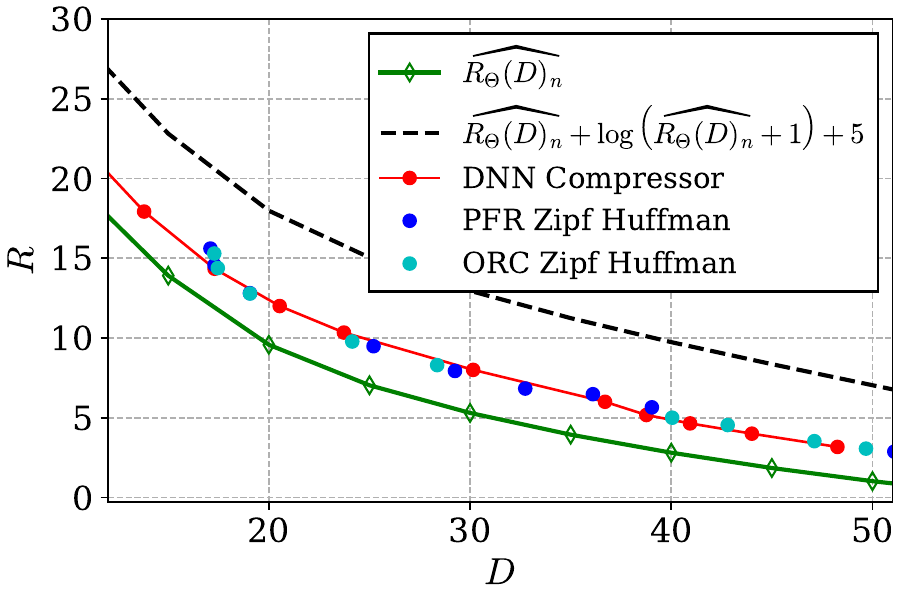}
         \caption{PFR and ORC on MNIST images.}
         \label{fig:pfr_orc_mnist}
	\end{minipage}
	\end{figure}
	
	\begin{figure}[t]
	\begin{minipage}{0.475\textwidth}
	     \centering
         \includegraphics[width=0.95\textwidth]{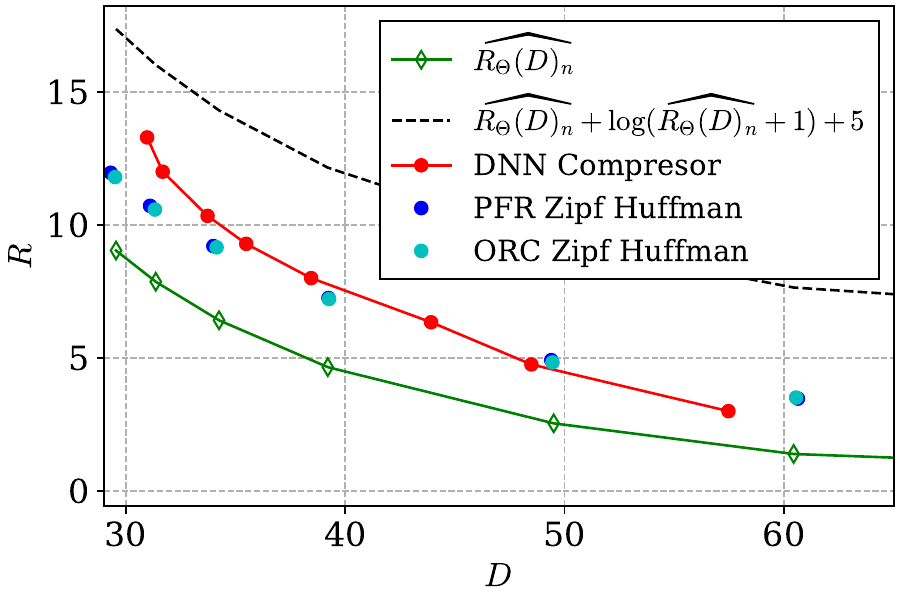}
         \caption{PFR and ORC on SVHN images.}
         \label{fig:pfr_orc_fmnist}
	\end{minipage}
	\hfill
	\begin{minipage}{0.475\textwidth}
        \centering
        \begin{subfigure}{0.95\textwidth}
             \centering
            \includegraphics[width=0.65\textwidth]{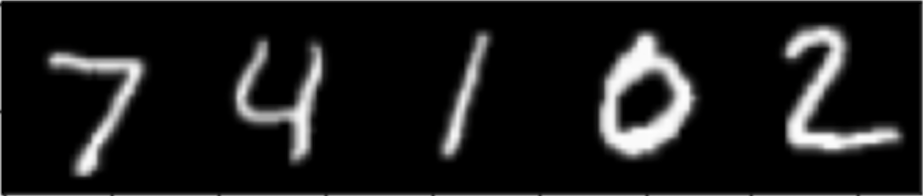}
            \caption{Original MNIST images.}
            \label{fig:PFR_original}
        \end{subfigure}
        \\
        \vspace{0.5em}
        \begin{subfigure}[b]{0.95\textwidth}
             \centering
            \includegraphics[width=0.65\textwidth]{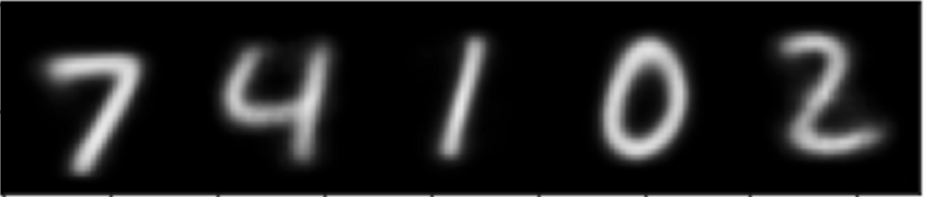}
            \caption{Decompressed MNIST images via PFR.}
            \label{fig:PFR_decompressed}
        \end{subfigure}
        \caption{Visualization of Alg.~\ref{alg:pfr_enc}, \ref{alg:pfr_dec} on MNIST images. }
    \end{minipage}
    \end{figure}
    In the $m$-dimensional Gaussian case, we let the source be $P_X=\mathcal{N}(0, \mathrm{diag}(\sigma_1^2,\dots,\sigma_m^2))$ where $\sigma_k^2 = 4e^{-\frac{1}{16}k^2}$. We can find the rate-distortion achieving channel $Q_{Y|X}^*$ and output marginal distribution $Q_Y^*$ in closed form \cite{CoverThomas}. Let $A = \{k:\sigma_k^2 > \lambda\}$, with $\lambda$ defined in \eqref{eq:rev_wf}. Then the channel and output marginal are both factorized as $Q_{Y|X}^* = \prod_{k=1}^m Q_{Y_k|X_k}$ and $Q_Y^* = \prod_{k=1}^m Q_{Y_k}^*$, where for $k \in A$, $Q_{Y_k|X_k=x} = \mathcal{N}(x, \lambda)$ and $Q_{Y_k} = \mathcal{N}(0, \sigma_k^2 + \lambda)$, and for $k \notin A$, $Q_{Y_k}$ and $Q_{Y_k|X_k=x}$ assign probability 1 to 0. We compare the RCC schemes (implemented using these known $Q_Y^*$ and $Q_{Y|X}^*$) with a one-shot DNN compressor trained on realizations from $P_X$, and show the results in Fig.~\ref{fig:gaussian_RCC}. As can be seen, PFR and ORC achieve comparable rate-distortion to each other, and are several bits below the rate guarantee \eqref{eq:one_shot_RD}. DNN compressors, interestingly, are also several bits within \eqref{eq:one_shot_RD} and seem to perform better than the RCC methods for lower rates but worse at higher rates. In Fig.~\ref{fig:RCC_Gaussian_Both}, we show the same comparison with the rate-distortion achieved by RCC methods via the $Q_Y^*$ learned from NERD. It can be seen that they closely mirror the performance of the RCC methods achieved using the exact $Q_Y^*$.
    
    The MNIST and FMNIST results are shown in Fig. \ref{fig:pfr_orc_mnist} and \ref{fig:pfr_orc_fmnist} respectively. For these datasets, we apply NERD-RCC by using the optimizers $Q_Y^*$ and $\beta^*$ from NERD since the rate-distortion optimizing channel and output marginal cannot be found in closed form. In both cases, NERD-RCC performs very similarly to the DNN one-shot code. As in the Gaussian case, both DNN compressors and NERD-RCC methods are within the rate guarantee of \eqref{eq:one_shot_RD}. In the Gaussian case in Fig.~\ref{fig:gaussian_RCC}, we have access to the true rate-distortion optimizing channel and output marginal, so there is no potential for reduced performance at higher rates due to these factors. Nevertheless, the benefit of NERD-RCC is that the rate achieved at distortion $D$ is always guaranteed to be within the upper bound $\widehat{R_{\Theta}(D)}_n + \log \paran*{\widehat{R_{\Theta}(D)}_n+1} + 5$, no matter what dataset or source samples that NERD uses. In contrast, DNN compressors provide no such performance guarantee. In Fig.~\ref{fig:PFR_decompressed}, we demonstrate several realizations of the source $P_X$ and the output of NERD-RCC. As can be seen, the reverse channel coding scheme outputs a noisy, but faithful reconstruction of the original source. 
    
    \section{Discussion and Open Problems}

    \subsection{Lower Bounds and Block Coding for DNNs}
     The results in the previous two sections demonstrate that one-shot lossy DNN compressors are close to the rate-distortion function of real-world datasets, and competitive with RCC schemes. However, it is difficult to conclude whether or not DNN compressors are optimal, since a characterization of the one-shot fundamental limits are not as clear for general source distributions. While \eqref{eq:one_shot_RD} provides an achievability result, lower bounds are not as clear for general sources. This precludes any definitive answer to optimality of one-shot DNN compressors for real-world datasets.
     
    %  From \cite{sfrl}, if one applies RCC to blocks of $K$ i.i.d. samples at once, rather than in a one-shot manner, the average 
    However, one might ask if perhaps this gap between the estimated rate-distortion function and rate-distortion achieved by the one-shot DNN compressors are due to the one-shot nature of the DNNs, and if DNN compressors that compress blocks of $M$ realizations at a time can help close this gap. To test this, we use a DNN compressor with the same architecture as the one-shot DNN compressor, but apply it to blocks of $M=4$ images combined together to form a larger image, in a $2\times 2$ configuration. We plot the average rate and distortion per image sample in Figs. \ref{fig:block} and \ref{fig:block_fmnist}. 
    \begin{figure}
        \begin{subfigure}{0.475\textwidth}
            \centering
            \includegraphics[width=0.95\linewidth]{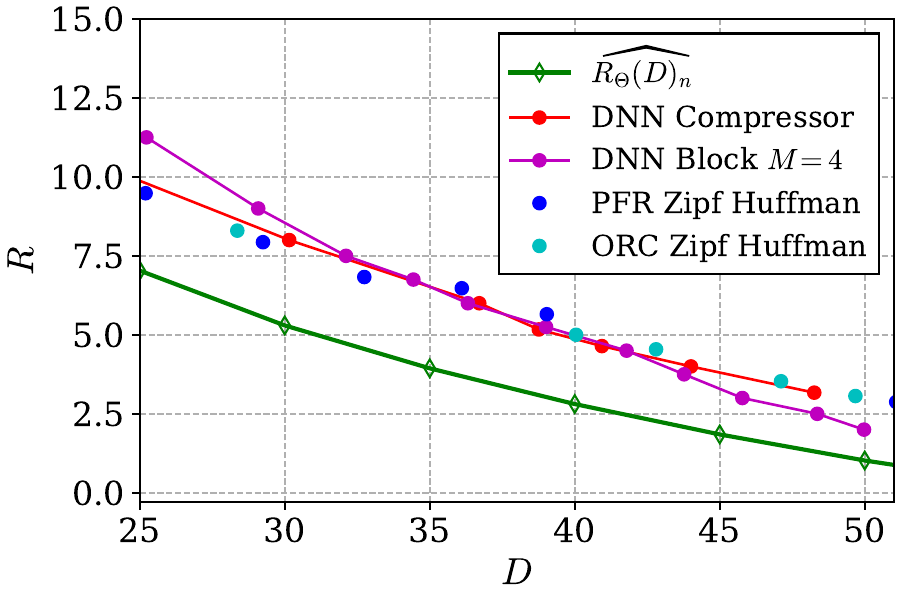}
            \caption{MNIST.}
            \label{fig:block}
        \end{subfigure}
        \begin{subfigure}{0.475\textwidth}
            \centering
            \includegraphics[width=0.95\linewidth]{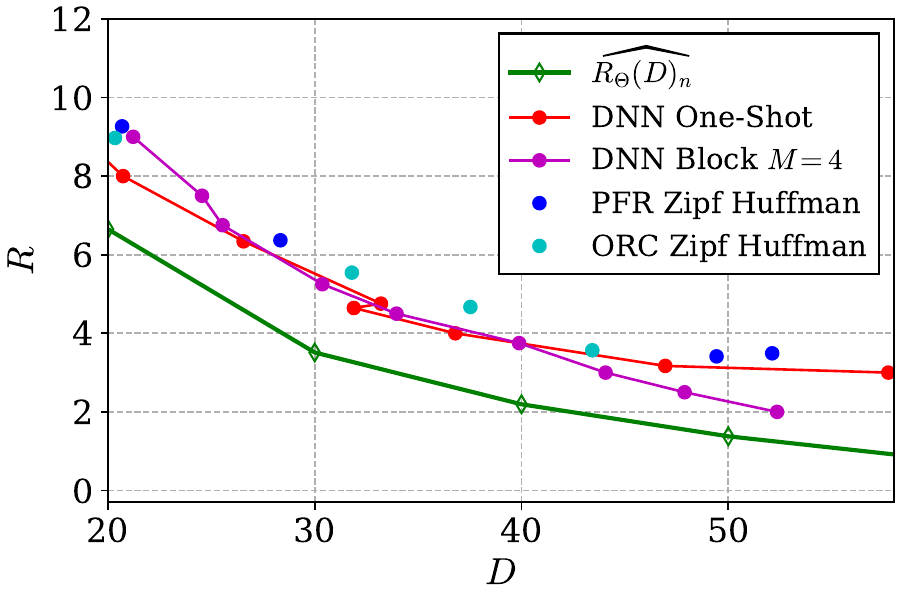}
            \caption{FMNIST.}
            \label{fig:block_fmnist}
        \end{subfigure}
        \caption{DNN block code versus RCC algorithms and DNN one-shot codes.}
    \end{figure}
    As we can see, for both MNIST and FMNIST, the rate-distortion achieved is consistently better than one-shot DNN compressors as well as the RCC algorithms for smaller rates, demonstrating that block DNN codes on i.i.d. sources can indeed provide performance gains. At higher rates, however, the block DNN compressor performs worse. This can potentially be attributed to limitations of the DNN architecture used, which worked well with single images, but may not be the best for 4 images that have been stitched together. One avenue for future work is finding DNN architectures that work well with compressing blocks of images, and to see how they compare to the rate-distortion curves. 
	
	\subsection{Computational Scaling of NERD and RCC}
    \begin{figure}
        \centering
        \includegraphics[width=0.5\linewidth]{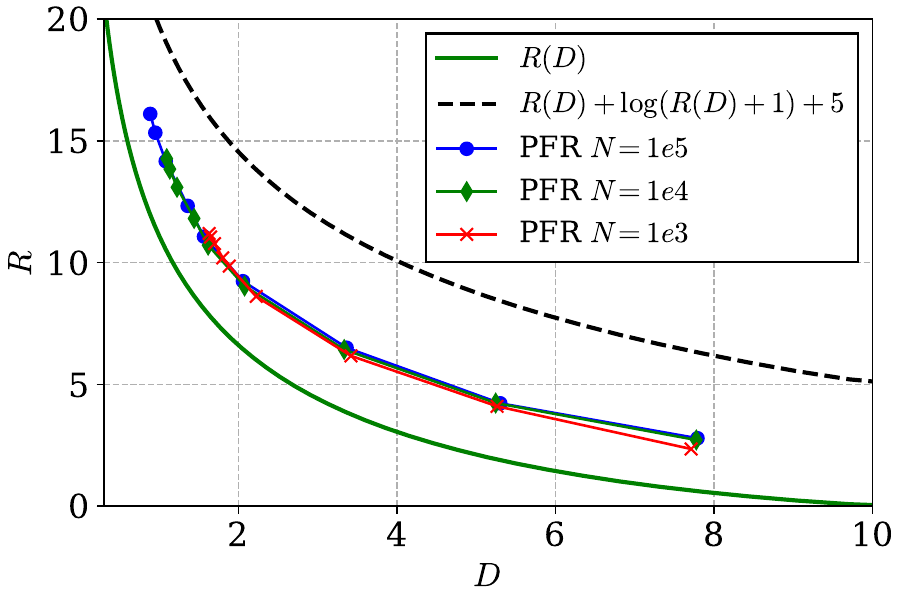}
        \caption{Effect of $N$ on RCC methods. Example here shows PFR on Gaussian source.}
        \label{fig:effectN}
    \end{figure}
    As noted in Sec.~\ref{sec:NERD}, NERD requires a number of samples exponential in the rate required. While this is fine for many real-world datasets such as MNIST and SVHN, this limitation does not allow one to estimate large regimes of the rate-distortion function for ``higher entropy'' datasets such as ImageNet. Designing methods that can accurately estimate $R(D)$ at scale on such datasets is left for future work.
    
    As shown in the previous section, RCC offers an alternative method for learning-based lossy source code design that does not explicitly use quantization, and has guarantees on the achievable rate and distortion which can be estimated directly from data using NERD. However, one drawback of this approach is that the sample complexity of many RCC algorithms is known to scale exponentially with the amount of information communicated from the sender to the receiver \cite{theis2021algorithms, flamich2022fast}. Indeed, when running RCC with varying $N$, the number of samples generated, the estimated rates seem to scale logarithmic with $N$. An example of this is shown in Fig.~\ref{fig:effectN} In comparison, DNN compressors do not suffer from such a runtime complexity. In order to scale to larger rates, one potential avenue is to adapt the recently proposed A$^*$ coding method in \cite{flamich2022fast} for the similar relative entropy coding (REC) problem to the RCC setting. The authors of \cite{flamich2022fast} use A$^*$ coding to achieve sample complexity linear in the rate rather than exponential. Applying these methods to the RCC schemes is left to future work.

\section{Conclusion}
In this paper, we propose a new algorithm for computing the rate-distortion function for real-world data. We use an alternative formulation of the rate-distortion objective which is amenable to parametrization with neural networks and provide an estimator, NERD, that is sample and computationally efficient. We empirically show that it accurately estimates the rate-distortion function for synthetic and real-world datasets. We show how NERD can be used to implement near-optimal compression schemes via reverse channel coding on real-world data, with performance guarantees, and demonstrate the potential of DNN block codes to achieve rate-distortion performance closer to the rate-distortion function.

\appendices

\section{Proofs}
\label{sec:proofs}

    \begin{theorem}[Theorem~\ref{thm:strong_consistency} in text]
        Suppose the alphabets are $\mathcal{X}= \mathcal{Y} = \mathbb{R}^m$, $P_Z$ is supported on $\mathcal{Z} \in \mathbb{R}^l$, and $P_Z$ is absolutely continuous with respect to Lebesgue measure. Also, suppose that the distortion measure $\dist$ is $L_{\dist}$-Lipschitz in both arguments. Then the NERD estimator in \eqref{eq:NERD} is a strongly consistent estimator of $R(D)$, i.e. 
        \begin{equation}
            \lim_{n \rightarrow \infty} \widehat{R_\Theta(D)}_n = R(D) ~\textrm{almost surely.}
        \end{equation}
    \end{theorem}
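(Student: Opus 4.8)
The plan is to sandwich $\widehat{R_\Theta(D)}_n$ between a lower bound coming from the rate-distortion function of the empirical measure and an upper bound coming from a single near-optimal generator, and to let both bounds converge to $R(D)$. For a generator $G$, a scalar $\tilde\beta\le 0$, and a probability measure $\mu$ on $\mathcal{X}$, abbreviate $\Psi_\mu(G,\tilde\beta) := \tilde\beta D - \EE_{X\sim\mu}[\log\EE_{P_Z}[e^{\tilde\beta\dist(X,G(Z))}]]$, so that $\widehat{R_\Theta(D)}_n = \inf_{G\in\mathcal{G}}\sup_{\tilde\beta\le 0}\Psi_{\hat{P}^{(n)}_X}(G,\tilde\beta)$, and, by the Rate Function Duality lemma applied with source $\mu$, $\sup_{\tilde\beta\le 0}\Psi_\mu(G,\tilde\beta) = R(G_*P_Z, D;\mu)$, the rate function of $\mu$ with reproduction marginal $G_*P_Z$. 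As is standard for neural-estimation consistency statements, I read $\mathcal{G}$ as a nested sequence of network classes whose union is dense in $C(\mathbb{R}^l,\mathbb{R}^m)$, with the index allowed to grow slowly with $n$; the argument is insensitive to the growth rate.

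For the lower bound, note that restricting the reproduction marginal to pushforwards $G_*P_Z$ can only enlarge the rate function, so $\widehat{R_\Theta(D)}_n = \inf_{G}R(G_*P_Z, D;\hat{P}^{(n)}_X) \ge \inf_{Q_Y}R(Q_Y, D;\hat{P}^{(n)}_X) = R(D;\hat{P}^{(n)}_X)$, the rate-distortion function of the empirical source; plug-in consistency \cite{RDplugin} — whose hypotheses are met under the stated regularity, the $L_{\dist}$-Lipschitz assumption in particular — gives $R(D;\hat{P}^{(n)}_X)\to R(D)$ almost surely, whence $\liminf_n\widehat{R_\Theta(D)}_n\ge R(D)$ a.s. For the upper bound, fix $\epsilon>0$. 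The mapping-approach equivalence \cite{mapping_approach} together with absolute continuity of $P_Z$ yields a measurable $G^\star$ with $R(G^\star_*P_Z,D)\le R(D)+\epsilon$; approximating $G^\star$ uniformly on compact sets by a network $G^\star_\epsilon$ and using continuity of $Q_Y\mapsto R(Q_Y,D)$ along this approximation — this is where Lipschitzness of $\dist$ is essential, as it controls the exponential-moment functionals $\log\EE_{Q_Y}[e^{\tilde\beta\dist(x,Y)}]$ uniformly and supplies the needed uniform integrability — we may assume $G^\star_\epsilon\in\mathcal{G}$ with $R((G^\star_\epsilon)_*P_Z,D)\le R(D)+2\epsilon$ and finite. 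Then for $n$ large $\widehat{R_\Theta(D)}_n\le\sup_{\tilde\beta\le 0}\Psi_{\hat{P}^{(n)}_X}(G^\star_\epsilon,\tilde\beta)$, and it remains to show the right side converges to $R((G^\star_\epsilon)_*P_Z,D)$. Because $R((G^\star_\epsilon)_*P_Z,D)<\infty$ the generator attains distortion strictly below $D$, so the stationarity identity \eqref{eq:stationary} and concavity of $\tilde\beta\mapsto\Psi_\mu(G^\star_\epsilon,\tilde\beta)$ confine the maximizer to a compact interval $[-B_\epsilon,0]$ (for the population objective, and, via the strong law of large numbers applied to $\frac1n\sum_i\inf_z\dist(X_i,G^\star_\epsilon(z))$, for the empirical objective once $n$ is large, almost surely). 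On that interval, the strong law gives $\Psi_{\hat{P}^{(n)}_X}(G^\star_\epsilon,\tilde\beta)\to\Psi_{P_X}(G^\star_\epsilon,\tilde\beta)$ a.s. for each fixed $\tilde\beta$ — the summand $\log\EE_{P_Z}[e^{\tilde\beta\dist(X,G^\star_\epsilon(Z))}]$ is nonpositive and $P_X$-integrable since $R((G^\star_\epsilon)_*P_Z,D)<\infty$ — and since $\tilde\beta\mapsto\Psi_\mu(G^\star_\epsilon,\tilde\beta)$ is concave, pointwise convergence upgrades to uniform convergence on $[-B_\epsilon,0]$, hence $\sup_{\tilde\beta\le0}\Psi_{\hat{P}^{(n)}_X}(G^\star_\epsilon,\tilde\beta)\to R((G^\star_\epsilon)_*P_Z,D)\le R(D)+2\epsilon$ a.s. Intersecting probability-one events over a sequence $\epsilon\downarrow 0$ gives $\limsup_n\widehat{R_\Theta(D)}_n\le R(D)$ a.s., and combined with the lower bound the theorem follows.

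The main obstacle is the interplay of the two unbounded ingredients: the dual variable $\tilde\beta\in(-\infty,0]$ and the noncompact alphabet $\mathbb{R}^m$. One must verify that the inner supremum is genuinely attained on a fixed compact interval and — the delicate point — that the empirical maximizer does not drift to $-\infty$ as $n$ grows, which requires care with the integrability of $\inf_z\dist(X,G^\star_\epsilon(Z))$ and is exactly the kind of technical step handled in \cite{Dembo,RDplugin}. Both the plug-in step and this compactness step implicitly lean on an integrability condition on $P_X$ (e.g.\ a finite first moment, which Lipschitzness of $\dist$ does not by itself provide), so one either assumes it or inserts a truncation argument. The approximation step additionally demands neural-network universality on the noncompact latent space $\mathbb{R}^l$ together with continuity of the rate function along the approximating generators; once the Lipschitz hypothesis tames the exponential-moment functionals, the remaining strong-law-plus-concavity bookkeeping and the sandwich are routine. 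An alternative to the two-sided argument, which also yields a rate, is to observe that $x\mapsto\log\EE_{P_Z}[e^{\tilde\beta\dist(x,G(Z))}]$ is $|\tilde\beta|L_{\dist}$-Lipschitz uniformly in $G$, so that $|\widehat{R_\Theta(D)}_n - R_{\mathcal{G}}(D)|\le B L_{\dist}\,W_1(\hat{P}^{(n)}_X,P_X)$ and Wasserstein convergence of the empirical measure closes the argument directly, given $R_{\mathcal{G}}(D)=R(D)$.
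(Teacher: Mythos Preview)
Your sandwich argument is sound and takes a genuinely different route from the paper. The paper decomposes the error as $|\widehat{R_\Theta(D)}_n - R_\Theta(D)| + |R_\Theta(D) - R(D)|$: for the second term it proves continuity of the rate function $Q_Y\mapsto R(Q_Y,D)$ in the $W_1$ topology on reproduction measures (using the $L_{\dist}|\beta|$-Lipschitz bound on $y\mapsto e^{\beta\dist(x,y)}$ together with the implicit function theorem to control drift of the maximizing $\beta$), then invokes a distributional universal-approximation theorem to place a neural pushforward $W_1$-close to the optimal $Q_Y^*$; for the first term it simply cites the parametric consistency result of \cite{RDplugin}. You instead lower-bound the estimator by the unrestricted empirical rate-distortion $R(D;\hat{P}^{(n)}_X)$ and invoke \cite{RDplugin} there, and upper-bound it by freezing a single near-optimal generator and arguing empirical-to-population convergence explicitly via the strong law and concavity-implies-uniform-convergence on a compact $\tilde\beta$ interval. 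Your decomposition is arguably cleaner---it avoids the implicit-function-theorem step entirely and isolates the role of concavity---and separating liminf from limsup is often what one uses downstream. One technical caveat: ``approximating $G^\star$ uniformly on compact sets'' presupposes $G^\star$ is continuous, which the mapping approach does not supply; the paper sidesteps this by approximating the \emph{distribution} $Q_Y^*$ in $W_1$ via a universal-approximation theorem for pushforward measures, and you should do the same (or first pass to a continuous $L^p(P_Z)$ approximant of $G^\star$, which then pushes forward to a $W_p$-close measure). Your closing $W_1$-on-$P_X$ alternative is a pleasant dual to the paper's $W_1$-on-$Q_Y$ argument and, as you note, even delivers a convergence rate.
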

    \begin{proof}
        Fix $\epsilon > 0$. Define 
        \begin{equation}
            \tilde{R}(Q, \beta):= \beta D - \EE_{P_X} \bracket*{\log \EE_{Q} \bracket*{e^{\beta \dist(X,Y)}}}
        \end{equation}
        where $\beta \leq 0$ is nonpositive, and 
        \begin{equation}
            \tilde{R}(Q) := \max_{\beta \leq 0} R(Q, \beta)
        \end{equation}
        since the inner sup has a unique maximum \cite{Dembo}. By definition $R(D) = \inf_{Q_Y} \tilde{R}(Q_Y)$, there exists $Q_Y^*$ such that $\tilde{R}(Q_Y^*) < R(D) + \epsilon / 4$. We would like to find a distribution $Q_{\theta} \sim G_{\theta}(Z)$, $Z \sim P_Z$, parametrized by a neural network $G_{\theta}$, that has $\tilde{R}(Q_{\theta})$ close to $\tilde{R}(Q_Y^*)$.
        
        For fixed $x$, the function $y \mapsto e^{\beta \dist(x,y)}$ is $L_{\dist} |\beta|$-Lipschitz, since $\beta \leq 0$. Hence, by Lemma~\ref{lemma:W1bound}, 
        \begin{equation}
            \abs*{\EE_{Q_Y^*}\bracket*{e^{\beta d(x,Y)}} - \EE_{Q_{\theta}}\bracket*{e^{\beta d(x,Y)}}} \leq L_{\dist}|\beta| W_1(Q_{\theta}, Q_Y^*)
        \end{equation}
        Let $\beta^*, \beta_{\theta}$ be the maximizers of $\tilde{R}(Q_Y^*)$ and $\tilde{R}(Q_{\theta})$ respectively. Since $\log$ is a continuous function, there exists $\delta_1 > 0$ such that if $W_1(Q_{\theta}, Q_Y^*) < \frac{\delta_1}{L_{\dist} |\beta^*|}$, then 
        \begin{equation}
            \abs*{\log \EE_{Q_Y^*}\bracket*{e^{\beta^* d(x,Y)}} - \log \EE_{Q_{\theta}}\bracket*{e^{\beta^* d(x,Y)}}} < \epsilon / 8
        \end{equation}
        Therefore, choosing $Q_{\theta}$ such that $W_1(Q_{\theta}, Q_Y^*) < \frac{\delta_1}{L_{\dist} |\beta^*|}$ yields
        \begin{align}
            &|\tilde{R}(Q_Y^*, \beta^*) - \tilde{R}(Q_{\theta}, \beta^*)| \\
            &\leq \EE_{P_X}\bracket*{\abs*{\log \EE_{Q_Y^*}\bracket*{e^{\beta^* d(X,Y)}} - \log \EE_{Q_{\theta}}\bracket*{e^{\beta^* d(X,Y)}}}} \\
            &\leq \EE_{P_X} \bracket*{\epsilon/8} = \epsilon / 8
        \end{align}
        where the first inequality holds due to Jensen's inequality. 
        
        We now wish to bound $|\tilde{R}(Q_{\theta}, \beta^*) - \tilde{R}(Q_{\theta}, \beta_{\theta})|$. Since the map $\beta \mapsto \tilde{R}(Q_{\theta}, \beta)$ is continuous in $\beta$, there exists $\delta_2 > 0$ such that if $|\beta^* - \beta_{\theta}| < \delta_2$, then $|\tilde{R}(Q_{\theta}, \beta^*) - \tilde{R}(Q_{\theta}, \beta_{\theta})| < \epsilon/8$. 
        Define
        \begin{equation}
            f(Q,\beta) := \EE_{P_X} \bracket*{\frac{\EE_{Q}\bracket*{\dist(X,Y)e^{\beta \dist(X,Y)}}}{\EE_{Q}\bracket*{e^{\beta \dist(X,Y)}}}} - D
        \end{equation}
        whose root provides the solution to $\tilde{R}(Q)$. We can apply Lemma~\ref{lemma:W1bound} to the numerator and denominator inside the $P_X$ expectation to conclude that with $Q_{\theta}$ satisfying $W_1(Q_Y^*, Q_{\theta}) < C$. Since the function $x_1,x_2 \mapsto \frac{x_1}{x_2}$ is continuous, we can conclude that $f(Q,\beta)$ is continuous with respect to the $W_1$ distance in $Q$ since composition of continuous functions are continuous. Hence there is some $\delta_3 > 0$ such that if $W_1(Q_\theta, Q_Y^*) < \delta_3$, the maximizers $\beta_{\theta}$ and $\beta^*$ must satisfy $|\beta^* - \beta_{\theta}| < \delta_2$, by the implicit function theorem.
        
        We can thus find a neural network $G_{\theta'}$ with sufficient complexity\footnote{Assuming the function class $\Theta$ contains fully-connected networks with ReLU activations, with sufficient width and depth.} via \cite{UAT_distribution} such that $Q_{\theta'}$ satisfies $W_1(Q_{\theta'}, Q_Y^*) < \min \paran*{\frac{\delta_1}{L_{\dist}|\beta^*|}, \delta_3}$, yielding 
        \begin{equation}
            |\tilde{R}(Q_Y^*) - \tilde{R}(Q_{\theta'})| < |\tilde{R}(Q_Y^*, \beta^*)-\tilde{R}(Q_{\theta'}, \beta^*)| + |\tilde{R}(Q_{\theta'}, \beta^*) - \tilde{R}(Q_{\theta'}, \beta_{\theta'})| < \epsilon/4
        \end{equation}
        and hence $|R(D) - \tilde{R}(Q_{\theta'})| < \epsilon/2$.
        Since $\tilde{R}(Q_{\theta'})$ is an upper bound of $R(D)$, we have that 
        \begin{equation}
            |R(D) - R_{\Theta}(D)| = |R(D) - \inf_{\theta \in \Theta} \tilde{R}(Q_{\theta})| < \epsilon / 2
        \end{equation}
        Applying the strong consistency result for the parametric problem in \cite{RDplugin}, $\exists N \in \mathbb{N}$ such that for all $n \geq N$, $|\widehat{R_{\Theta}(D)}_n - R_{\Theta}(D)| < \epsilon/2$ almost surely, and so again by triangle inequality, we have that $|\widehat{R_{\Theta}(D)}_n - R(D)| < \epsilon$ for all $n \geq N$, almost surely.
    \end{proof}
    
    \begin{lemma}
        \label{lemma:W1bound}
        Let $g : \mathcal{X} \rightarrow \mathbb{R}$ be $L$-Lipschitz. Then for any distributions $P,Q \in \mathcal{P}(\mathcal{X})$,
        \begin{equation}
            |\EE_{X\sim P}[g(X)] - \EE_{X \sim Q}[g(X)]| \leq L\cdot W_1(P,Q)
        \end{equation}
        where 
        $W_1(P,Q) := \inf_{\pi \in \Pi(P,Q)} \EE_{X,X'\sim \pi} [\|X-X'\|]$ is the $1$-Wasserstein distance.
    \end{lemma}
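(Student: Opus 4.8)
The plan is to prove this directly from the coupling (Kantorovich) definition of $W_1$, which is exactly the form stated in the lemma. First I would fix an arbitrary coupling $\pi \in \Pi(P,Q)$, that is, a joint law on $\mathcal{X}\times\mathcal{X}$ whose first marginal is $P$ and whose second marginal is $Q$. Using the marginal property, I rewrite the difference of expectations as a single expectation over the coupling:
\[
\EE_{X \sim P}[g(X)] - \EE_{X \sim Q}[g(X)] = \EE_{(X,X') \sim \pi}\bracket*{g(X) - g(X')}.
\]

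Next I would pass the absolute value inside the expectation (Jensen, or monotonicity of the integral) and then apply the $L$-Lipschitz bound pointwise, $|g(X)-g(X')| \le L\|X-X'\|$, obtaining
\[
\abs*{\EE_{X \sim P}[g(X)] - \EE_{X \sim Q}[g(X)]} \le \EE_{(X,X')\sim\pi}\bracket*{\abs*{g(X)-g(X')}} \le L\,\EE_{(X,X')\sim\pi}\bracket*{\|X-X'\|}.
\]
Since the left-hand side is independent of $\pi$ while the right-hand inequality holds for every coupling, the final step is to take the infimum over $\pi \in \Pi(P,Q)$, which gives precisely $L\cdot W_1(P,Q)$.

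There is essentially no genuine obstacle: the result is a one-line consequence of the definition of $W_1$ as an infimum over couplings. The only items worth a brief remark are (i) the case $W_1(P,Q)=\infty$, where the inequality is vacuous, so one may assume $W_1(P,Q)<\infty$ and hence that couplings of finite transport cost exist; and (ii) integrability of $g$ under $P$ and $Q$, which follows from the Lipschitz growth bound $|g(x)| \le |g(x_0)| + L\|x-x_0\|$ together with finiteness of first moments (automatic when $W_1(P,Q)<\infty$, and in all applications in this paper the relevant measures are Gaussian or compactly supported). Granting these caveats, the displayed chain of inequalities is rigorous and completes the proof.
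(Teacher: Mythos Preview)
Your argument is correct and fully rigorous; it differs from the paper's proof in that you work directly with the primal (coupling) definition of $W_1$ given in the lemma, whereas the paper rescales to $g' = g/L$, notes that $g'$ is $1$-Lipschitz, and then bounds $|\EE_P[g'(X)]-\EE_Q[g'(X)]|$ by the supremum over all $1$-Lipschitz test functions, which equals $W_1(P,Q)$ by Kantorovich--Rubinstein duality. Your route is arguably more self-contained, since the lemma states $W_1$ in coupling form and you never need to invoke the duality theorem; the paper's route is a one-liner once KR duality is taken for granted but imports that nontrivial result as a black box. Both approaches are standard and yield the same bound with the same constant.
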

    \begin{proof}
        Let $g'(x) = \frac{g(x)}{L}$, so that $g'$ is 1-Lipschitz.
        \begin{align}
            &|\EE_{X\sim P}[g'(X)] - \EE_{X \sim Q}[g'(X)]| \\&\leq \sup_{f: \|f\|_{\mathrm{Lip}} \leq 1}  \EE_{X\sim P}[f(X)] - \EE_{X \sim Q}[f(X)] \\
            &= W_1(P,Q)
        \end{align}
        where the last step is by Kantorovich-Rubinstein duality \cite{KR, ComputationalOT} and $\|f\|_{\mathrm{Lip}} := \sup_{\substack{x_1, x_2 \in \mathcal{X} \\ x_1 \neq x_2}} \frac{|f(x_2)-f(x_1)|}{\|x_1-x_2\|}$ is the Lipschitz norm of $f$.
    \end{proof}

\section{Additional Experiments}

\label{sec:additional_experiments}
In this section, we provide additional experiments to support the main text. Fig.~\ref{fig:gaussian2} provides an additional example of NERD on a different Gaussian source. Figs.~\ref{fig:RD_only}, \ref{fig:pfr_orc_fmnist}, and \ref{fig:rates_fmnist} provides FMNIST results. Fig.~\ref{fig:RCC_Gaussian_Both} provides NERD-estimated $Q_Y^*$ in addition to exact $Q_Y^*$ for RCC methods on the Gaussian source.

\begin{figure}
    \centering
    \includegraphics[width=0.5\linewidth]{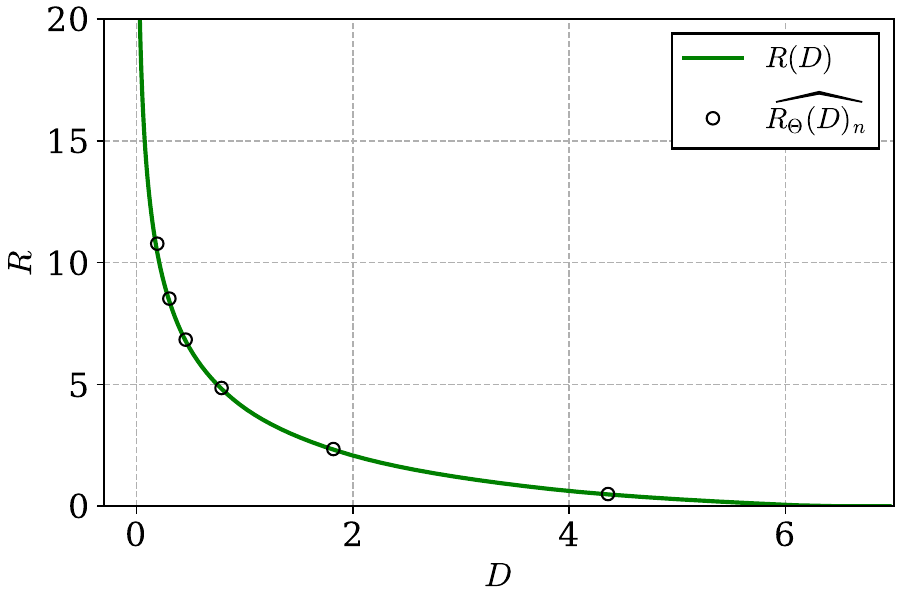}
    \caption{Gaussian rate-distortion with 40 dimensions and $\sigma_k^2 = 4e^{-\frac{1}{4}k}$.}
    \label{fig:gaussian2}
\end{figure}

\begin{figure}[!t]
    \begin{minipage}{0.475\textwidth}
        \centering
    \includegraphics[width=0.95 \textwidth]{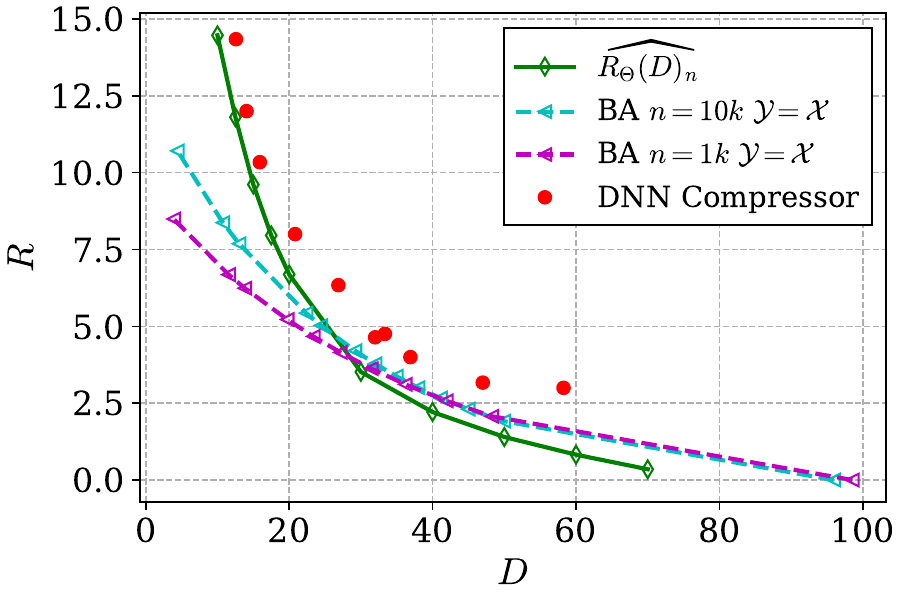}
    \caption{Estimated $\widehat{R_\Theta(D)}_n$ (NERD) of  FMNIST images vs. DNN compressors and BA.}
    \label{fig:RD_only}
    \end{minipage}
    \hfill
    \begin{minipage}{0.475\textwidth}
	     \centering
     \includegraphics[width=.95\textwidth]{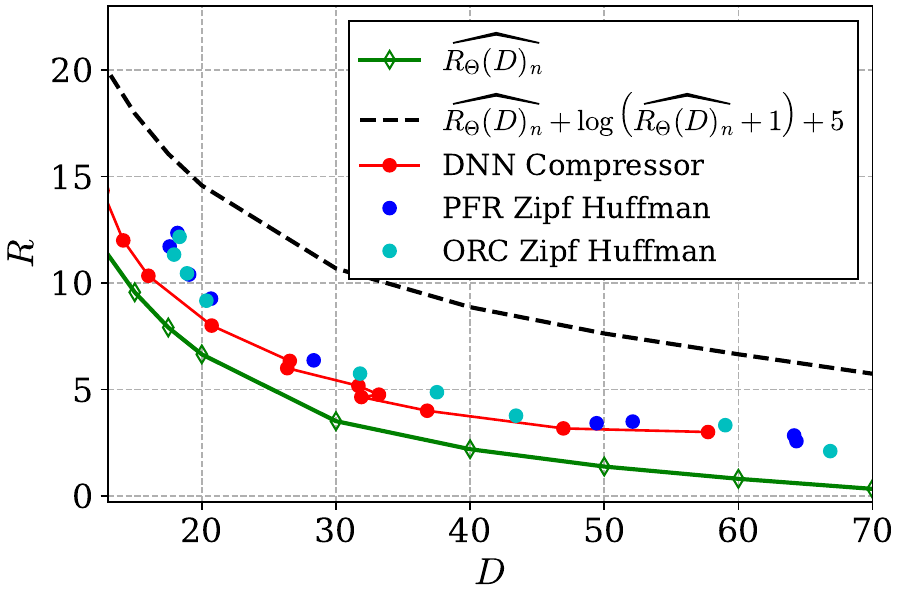}
     \caption{PFR and ORC on FMNIST images.}
     \label{fig:pfr_orc_fmnist}
	\end{minipage}
	\end{figure}

\begin{figure}[!t]
    \begin{minipage}{0.475\textwidth}
        \centering
        \begin{subfigure}{0.95\textwidth}
             \centering
            \includegraphics[width=0.75\textwidth]{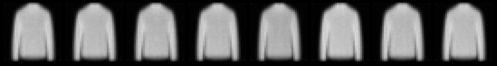}
            \caption{$R=0.33$ bits, $D=70$.}
        \end{subfigure}
        \\
        \vspace{0.5em}
        \begin{subfigure}[b]{0.95\textwidth}
             \centering
            \includegraphics[width=0.75\textwidth]{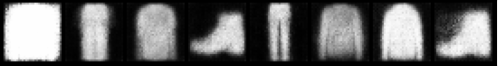}
            \caption{$R = 3.54$ bits, $D = 30$.}
        \end{subfigure}
        \caption{Samples from trained $Y$-marginal $Q^*_Y$ (FMNIST). As $R \rightarrow 0$, $Q^*_Y$ generates the mean image, achieving $D=D_{\max}$.}
        \label{fig:rates_fmnist}
    \end{minipage}
    \hfill
    \begin{minipage}{0.475\textwidth}
	     \centering
        \includegraphics[width=0.95 \textwidth]{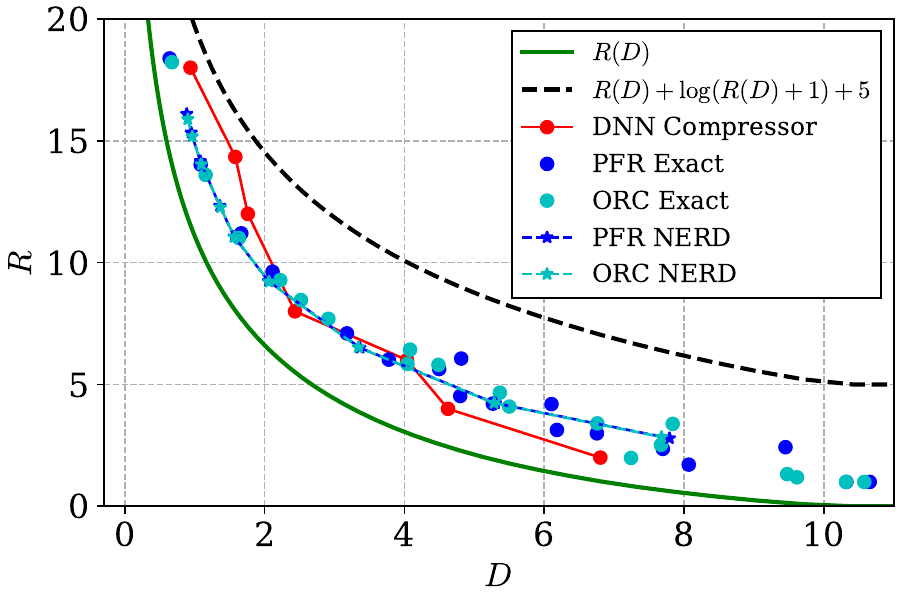}
        \caption{RCC methods on Gaussian source with exact and NERD-estimated $Q_Y^*$.}
        \label{fig:RCC_Gaussian_Both}
	\end{minipage}
	\end{figure}

% \section{\blue{Implementation Details}}
% \label{sec:implementation}

% \section*{Acknowledgment}

% We thank

\bibliographystyle{IEEEtran}
\bibliography{IEEEabrv, ref}

\end{document}